\title[Gauge symmetries in 2D field theory]
{Gauge symmetries in 2D field theory}
\author{S.L. Lyakhovich and  A.A. Sharapov}
\address{Department of Quantum Field Theory, Tomsk State University, Tomsk 634050, Russia}
\email{sll@phys.tsu.ru, sharapov@phys.tsu.ru}
\thanks{The work was partially supported by the project 2.3684.2011 of Tomsk State University and the RFBR grant 13-02-00551. A.Sh. appreciates the financial support from Dynasty Foundation.}
\newtheorem{thm}{Theorem}[section]
\newtheorem{prop}{Proposition}[section]
\newtheorem{cor}{Corollary}[section]
\def\deg{\mathrm{deg}}
\begin{document}

\maketitle

\begin{abstract}
A simple algorithm is proposed for constructing generators of gauge symmetry as well as reducibility relations for arbitrary systems of field equations in two dimensions.
\end{abstract}

\section{Introduction}
In this paper we suggest a simple and a general algorithm for finding
all gauge symmetries, given a system of local field equations in
two dimensions. The method works equally well for Lagrangian and
non-Lagrangian equations and it is  local in space-time.
In contemporary field theory, the field equations are often
constructed with a pre-specified gauge symmetry. In that case, one
has to be sure that the theory does not have any other
gauge symmetries. So, a systematic method of identifying a
\emph{complete} gauge symmetry of given field equations can be
useful even for the models having some known gauge invariance by
construction.

The Dirac-Bergmann algorithm allows one to find all gauge
symmetries for Lagrangian dynamics by casting the equations into the
normal form of the constrained Hamiltonian formalism \cite{Dirac},
\cite{GT}, \cite{HT}.  This algorithm can be extended to the general
systems, not necessarily Lagrangian, by bringing the dynamics to the
normal involutive form \cite{LS}. The Dirac-Bergmann algorithm was
originally formulated for mechanical systems. In this form, it has
been further developed by most of the followers, see for review
\cite{GT}, \cite{HT}, \cite{LS}. Its extension to field theory is
straightforward if the locality in space is not an
issue\footnote{For instance, the Dirac bracket, being an important
part of the Dirac formalism, may be non-local in space~\cite{GT},\cite{HT}.}. Besides locality,  the other subtleties are also known
concerning application of the classical Dirac-Bergmann algorithm
to field theories \cite{ST}.

The explicit knowledge of  space-time local  generators of a complete
gauge symmetry is a necessary pre-requisite for solving most of crucial
problems in field theory, like identifying global
symmetries and conservation laws, constructing  consistent
interactions and quantization \cite{BBH}. The recent developments in the BRST
formalism \cite{LS0}, \cite{KazLS}, \cite{LS1}, \cite{LS2} allow one
to solve the same range of problems for not necessarily Lagrangian
field theories. The list of examples of non-Lagrangian models of
current interest includes chiral bosons in various dimensions,
Seiberg-Witten and Donaldson-Uhlenbeck-Yau equations, various
conformal field theories with extended supersymmetry, and
M.A.~Vasiliev equations of interacting higher-spin massless fields.

While the importance of explicit identification of  gauge
symmetries is widely recognized in physics, on the mathematical side
the gauge invariance of PDEs is often considered as an ``unpleasant
complication'', which should be overcome immediately by imposing
appropriate gauge fixing conditions making the system fully
determined (see e.g. \cite{VK}, \cite{Seiler}). Perhaps, the only
exception to this practice is the mathematical theory of optimal
control, where the gauge symmetry reincarnates as controlability. An
expanded discussion of the relationship between both the concepts
can be found in \cite{LS}. In that paper, we also described a normal
form that the general system of ODEs can be brought into, and proved some
basic theorems on the structure of gauge symmetry transformations.

The present paper extends the results of \cite{LS} to the general,
not necessarily Lagrangian, 2D field theory, providing a systematic
method for finding a complete gauge symmetry. The
extension is not straightforward due to appearance of new
integrability conditions steaming from commutativity of partial
derivatives, that has no analogue in mechanics. The main difference,
however, is the change of the ground ring underlying the analysis of
gauge symmetries.  The situation can be   described schematically by
the following table:
$$
\begin{tabular}{|c|c|c|c|}
  \hline
   & ODEs : D=1 & PDEs , D=2 & PDEs , D$>$2 \\ \hline
  The ground ring & meromorphic functions                & ordinary diff. operators & partial diff. operators \\
              &                          (m.f.)           & with coefficients in m.f. & with coefficients in m.f.\\
  \hline
  Algebraic  & commutative,        & non-commutative,  & non-commutative, \\
properties   & differential field  & principal ideal domain & Noetherian \\
  \hline
\end{tabular}
$$
As is seen, the 2D field theories are intermediate in algebraic properties between ODEs and higher dimensional PDEs.
This allows us to consider the case of two dimensions
as special\footnote{Let us also mention a plenty of nonlinear integrable models known in D=2,
though this fact is not directly related to the present work.}.

The structure of the present paper corresponds to the structure of
the algorithm we propose for finding gauge symmetries. The
latter includes three steps. Given a system of 2D PDEs, we transform
it to the Cartan normal form, revealing thus all hidden
integrability conditions, if any. As a result we get a formally
integrable system of the first order PDEs. This preparatory step is
quite standard and it is explained in Sec. 2. In the same section,
we also recall an algebraic background needed for a rigorous
definition of the notion of a gauge symmetry and illustrate this
notion by two simple yet general examples. These examples
demonstrate
In Sec. 3.1, we show that any 2D field theory in the
Cartan normal form can be embedded into a constrained Hamiltonian
system, which we call the Pontryagin system, in such a way that the
gauge symmetries of the original equations extend to those of the
Pontryagin action. Applying the second Noether theorem to the
Pontryagin system reduces the problem of finding gauge symmetries to
that of finding the Noether identities, as it is explained in Sec.
3.2. Due to the special structure of the Hamiltonian equations, the
latter problem amounts to constructing differential identities among
the primary Hamiltonian constraints and it is the point where the
theory of finitely generated modules over rings of differential
operators comes to forefront. In Sec. 3.3, we construct a minimal
free resolution for the differential module associated with the
Hamiltonian constraints, from which both a generating set for the
gauge symmetry transformations and the corresponding reducibility
relations can be read off. Among other things, this construction
provides a direct proof of the fact that in 2D field theory any
gauge symmetry admits no more than one stage of reducibility.

In Sec. 4, we consider a particular example of nonlinear relativistic field equation. As well as being an  illustration to our method,  it demonstrates an interesting phenomenon of bifurcation of the structure of gauge symmetry when one varies numerical parameters entering the model.
In particular, it shows that a smooth deformation of free field equations by inclusion of interaction is not always followed by a smooth deformation of the corresponding gauge  generators, even though the overall number of independent gauge symmetries is preserved.

In the concluding Sec. 5, we summarize our results and formulate two plausible conjectures about the count of physical degrees of freedom in 2D field theory. The Appendix contains a useful theorem on the matrices over the ring of ordinary differential operators.

\section{Cartan normal form and gauge symmetries}
 By a two-dimensional field theory we understand an arbitrary
system of PDEs with two in\-de\-pen\-dent variables. We fix
neither the order of equations nor their number, which may
be completely arbitrary and in no way correlate with the number of
dependent variables (fields). In this section, we discuss a normal
form each two-dimensional system of field equations  can be brought into at the
cost of introducing axillary fields. This normal form will be a
starting point for the study of gauge symmetries in the next
section.

\subsection{Pfaffian systems} Let $\Lambda(M)=\bigoplus
\Lambda^k(M)$ denote the exterior algebra of differential forms on
an $n$-dimensional manifold $M$ and let $\mathcal{I} \subset
\Lambda(M)$ be an ideal of $\Lambda(M)$. A submanifold
$\Sigma\subset M$ is called an integral manifold of $\mathcal{I}$
if $\alpha|_\Sigma=0$ for all $\alpha \in {\mathcal{I}}$. In the
case where the ideal $\mathcal{I}$ is generated by a set of
$1$-forms $\Theta^J$ and $0$-forms $\Phi_A$ the looking-for
integral manifolds is known as the Pfaff problem. The
corresponding system of equations defining the integral manifolds,
\begin{equation}\label{}
    \Theta{}^J|_{\Sigma}=0\,,\qquad \Phi_A|_\Sigma=0\,,
\end{equation}
is called the Pfaffian system.

Let us  indicate how any system of PDEs
can be reduced to a Pfaffian system. If the system contains
equations of order higher than one, we can first reduce it to the
order one by introducing new unknown variables which represent
certain derivatives of the original ones. This being done, we
obtain a system of equations of the form $\Phi_A(x^i, \phi^J,
\partial \phi^J /\partial x^i)=0$, where $\{\phi^J\}_{J=1}^{n}$ are
the unknown and $\{x^i\}_{i=1}^d$ are the independent variables.
If we set $\varphi^J_i=\partial \phi^J/\partial x^i$, the original
system of PDEs can be replaced by the
Pfaffian system composed of the equations
$\Phi_A(x,\phi,\varphi)=0$ and $\Theta^J\equiv d
\phi^J-\varphi^J_idx^i=0$. The solutions of the original system
correspond to those solutions of the Pfaffian system that are
integral manifolds  of dimension $d$ on which the variables $x^i$
are independent. The last condition can be written as
$$dx^1\wedge\cdots\wedge dx^d|_{\Sigma}\neq 0\,.$$

Notice that the equations $\Phi_A=0$ define a submanifold
$N\subset M$ (perhaps with singularities) so that any integral
manifold $\Sigma$ belongs to  $N$. Therefore, without loss in
generality, we can restrict the $1$-forms $\Theta^J$ on $N$ and
obtain an equivalent Pfaffian system on $N$ generated by the
$1$-forms $\theta^J=\Theta^J|_N$. Actually, it is the system of
equations
\begin{equation}\label{PS}
    \theta^J|_\Sigma=0
\end{equation}
that is usually referred to as  a Pfaffian system (no algebraic
constraints $\Phi_A=0$).

Let $\omega^J = d\theta^J\in \Lambda^2(N)$.  Since the operations
of restriction and exterior differentiation commute to each other,
we have
\begin{equation}\label{dPS}
    \omega^J|_\Sigma=0
\end{equation}
whenever $\Sigma$ is a solution to  the Pfaff problem (\ref{PS}).
Therefore,  it is reasonable to consider equations (\ref{PS}) and
(\ref{dPS}) together. The procedure of adjoining to a Pfaffian
system the exterior differentials of its $1$-forms is just an
invariant way to allow for all possible integrability  conditions
associated with the original system of PDEs. Taken together, the
1-forms $\theta^J$ and 2-forms $\omega^J$ generate a differential
ideal $\mathcal{J} \subset \Lambda(N)$ called usually  an
\textit{exterior differential system} on $N$.

\subsection{Cartan normal form} In what follows we will exclusively deal with
the two-dimensional field theory. In view of the above this is
equivalent to the study of two-dimensional integral manifolds
$\Sigma$ for the Pfaffian system (\ref{PS}). Since our
consideration will be essentially local, we may assume the
manifold $N$ - the target space of fields - to be a suitable open
domain in $\mathbb{R}^n$ with Cartesian coordinates $\phi^i$, while
the surface $\Sigma$ - the source space of fields - is a
two-dimensional domain with coordinates $x$ and $\bar x$.
Restricting the target space, if necessary, we may further assume
the $1$-forms $\theta^J$, $J=1,\ldots, m$, to be linearly
independent at each point, that is, $\theta^1\wedge\cdots\wedge
\theta^m \neq 0$. Then, locally, we can separate the coordinates
$\phi$'s into two groups $\phi^J$ and $\phi^a$ and rearrange the
basis of 1-forms $\theta$'s in such a way that the Pfaffian system
takes the form
\begin{equation}\label{theta}
    \theta^J=d\phi^J-  Z_a^J(\phi) d\phi^a\,.
\end{equation}
The algebraic ideal generated by $\theta$'s can be extended  to
the differential ideal by adjoining the $2$-forms $d\theta^J$. A
straightforward computation yields
\begin{equation}\label{dt}
    d\theta^J=\Omega^J_{ab}(\phi)d\phi^a\wedge
    d\phi^b \qquad (\mathrm{mod} \;\theta)\,,
\end{equation}
where
\begin{equation}\label{}
   \Omega_{ab}^J= \partial_a
    Z_b^J-\partial_b
    Z_a^J+Z_a^I\partial_IZ^J_b-Z_b^I\partial_IZ^J_a\,.
\end{equation}
Clearly, the fields $\phi^i(x,\bar x)$ define  an integral surface
$\Sigma\subset \mathbb{R}^2$ for the exterior differential system
(\ref{theta}), (\ref{dt}) iff the following system of PDEs is
satisfied:
\begin{equation}\label{3eq}
T^J=\partial \phi ^J-Z_a^J(\phi) \partial \phi^a=0 \,,\qquad
\overline{T}{}^J=\bar\partial \phi^J-Z_a^J(\phi) \bar\partial
\phi{}^a=0\,,\qquad \widehat{T}^J=\Omega^J_{ab}(\phi)\partial
\phi^a
    \bar\partial \phi^b=0\,.
\end{equation}
These equations are not independent. As a consequence  of
(\ref{dt}) we have the identities
\begin{equation}\label{NId}
   \bar\partial T^J -T^I\partial_IZ_a^J\bar\partial \phi^a -\partial \bar T{}^J + \overline{T}{}^I\partial_IZ_a^J\partial \phi^a - \widehat{T}{}^J
   \equiv 0  \,.
\end{equation}
Let us interpret the equation $\widehat{T}{}^J=0$ as a system of
linear homogeneous equations with respect to the unknowns $\partial
\phi^a$. Then the general solution to this system can be written
as
\begin{equation}\label{lambda}
    \partial \phi^a=Z^a_\alpha(\phi,\bar\partial \phi)\lambda^\alpha\,,
\end{equation}
where the $\lambda$'s are arbitrary functions of $x$ and $\bar x$.
The number of the new fields $\lambda^\alpha$ is equal to $n-m-l$,
where $l$ is the rank of the matrix
$(\bar \partial\phi^b\Omega^J_{ba})$ in general position\footnote{
Notice that $n-m-l\geq 1$ as we always have an obvious solution
$\partial \varphi^a=\lambda\bar\partial\phi^a$. In case $n-m-l=1$
the integral surface $\Sigma$ degenerates into a curve since the
tangent vectors $\partial\phi^i$ and $\bar\partial\phi^i$ become
linearly dependent.}. System (\ref{3eq}) is now equivalent to
the following one:
\begin{equation}\label{ttt}
\begin{array}{l}
\widetilde{T}{}^J=\partial \phi^J-Z_a^J(\phi)
Z^a_\alpha(\phi,\bar\partial
\phi)\lambda^\alpha=0\,,\\[5mm]
T^a=\partial \phi^a-Z^a_\alpha(\phi,\bar\partial
\phi)\lambda^\alpha=0\,,\\[5mm]  \overline{T}{}^J=\bar\partial \phi^J-Z_a^J(\phi)
\bar\partial \phi^a=0\,.
\end{array}
\end{equation}
The identities (\ref{NId}) take the form
\begin{equation}\label{GID}
\partial  \overline{T}{}^J=\bar\partial
{\widetilde{T}}^J-T^a\Omega_{ab}^J\bar\partial \phi^b  -\bar\partial
(Z_a^J T^a)\,.
\end{equation}

Treating the first two equations in (\ref{ttt}) on equal footing,
we arrive at the following normal form of PDEs describing a
two-dimensional field theory:
\begin{equation}\label{NF}
    T^i\equiv \partial \phi^i-Z^i_\alpha(\phi,\bar\partial
    \phi^a)\lambda^\alpha=0\,,\qquad \overline{T}{}^J\equiv\bar\partial \phi^J-Z_a^J(\phi)
\bar\partial \phi^a=0\,.
\end{equation}
Here we also used the third equation in (\ref{ttt}) to express the
derivatives  $\bar\partial \phi^J$ in $\tilde{T}^J$ and $T^a$ in terms of
$\phi^i$ and $\bar\partial \phi^a$. It is convenient to think of
the independent variables $x$ and $\bar x$ as  the time and space
coordinates, respectively. Then the first equation in (\ref{NF})
governs the time evolution, while  the second one imposes
constraints on the space derivatives of fields. In
view of the  identity (\ref{GID}),  the constraint surface is
preserved by the time evolution.

Although any system of PDEs on plane can locally be reduced to the
normal form (\ref{NF}), the reduction can lead to a considerable increase in the size of the system. For this reason, it
is useful to slightly relax the form of equations (\ref{NF}) by
allowing higher-order space derivatives of fields together with nonlinear dependence of $\lambda$'s.
This leads us to what is known as the \textit{Cartan normal form} of equations:
\begin{equation}\label{RNF}
\begin{array}{l}
 T^i\equiv \partial \phi^i-Z^i(\phi,\bar\partial
    \phi^a, \ldots
    , \bar\partial^q
    \phi^a; \lambda, \bar\partial \lambda^\alpha,\ldots, \bar\partial^p\lambda^\alpha)=0\,,\\[3mm] \overline{T}{}^J\equiv\bar\partial
    \phi^J-Z^J(\phi,\bar\partial \phi^a, \ldots, \bar\partial^p \phi^a) =0\,,\\[3mm]i=1,\ldots, n\,,\qquad  \alpha=1,\ldots, l\,, \qquad J=1,\ldots, m\,,\qquad a=1,\ldots, n-m\,.
    \end{array}
\end{equation}
It is implied that the differential  constraints and the evolutionary
equations still satisfy  the compatibility condition
\begin{equation}\label{CC}
 \partial \overline{T}{}^J=U_I^J \overline{T}{}^I+V_{i}^JT^i\,,
\end{equation}
where $$U_I^J=\sum_n U_{I n}^J\bar\partial^n\,,\qquad V_i^J=\sum_n V_{in}^J\bar\partial^n$$ are  matrix differential operators in $\bar x$
with coefficients depending on fields and their derivatives.
Like (\ref{GID}),  the condition (\ref{CC}) ensures stationarity of the constraint surface $\overline{T}{}^J=0$.

As we will see in the next section,
replacing the Cartan normal form (\ref{RNF}) with the more rigid one
(\ref{NF}) yields no material simplification.

Having brought the equations into the Cartan normal form, one can
easily prove the existence and uniqueness of their solutions under
the assumption of analyticity of $Z$'s. In the Cartan approach the
integration procedure includes two steps: first one defines
admissible Cauchy data at a given instant of time and then integrate
the evolutionary equations. In more detail, the construction goes as
follows. Let $(x_0,\bar x_0)$ be an arbitrary space-time point.
Choose $n-m$ real-analytic functions $\phi^a(\bar x)=\phi^a(x_0,\bar
x)$ of $\bar x$. Substituting these functions into the second
equation in (\ref{RNF}) yields a well-defined system of $m$ ordinary
differential equations for the unknowns  $\phi^J(\bar
x)=\phi^J(x_0,\bar x)$. The equations have a unique solution subject
to the initial condition $\phi^J(\bar x_0)=\phi^J_0$. The curve
$\phi^i(x_0, x)=(\phi^a(\bar x), \phi^J(\bar x))$ is then used as
the Cauchy data for the first equation in (\ref{RNF}). Again, as
with the differential constraints, the evolutionary equations are in
the underdetermined Kovalevskaya form. This means that we can
prescribe $\lambda$'s to be any real-analytic functions of $x$ and
$\bar x$. Once these functions have been specified, the equations
$T^i=0$ take the usual Kovalevskaya form and the famous
Cauchy-Kovalevskaya theorem ensures the existence of a unique
solution $\phi^i(x,\bar x)$ with the initial data $\phi^i(x_0, x)$.
By construction, this solution satisfies the equation
$\overline{T}^J=0$ at $x=x_0$,  and hence for all $x$'s due to the
compatibility condition (\ref{CC}). Thus, we see that the general
solution to (\ref{RNF}) is determined by $m$ constants $\phi^J_0$,
$n-m$ analytic functions $\phi^a(\bar x)$ of a single variable, and
$l$ analytic functions $\lambda^\alpha(x, \bar x) $ of two
variables.

\subsection{Gauge symmetries}
Dependence of the general solution of the arbitrary analytic
functions $\lambda^\alpha$ suggests that the system (\ref{RNF}) enjoys
an $l$-parameter gauge symmetry. By a gauge symmetry
we understand an infinitesimal transformation
\begin{equation}\label{GT}
    \delta_\varepsilon \phi^i=\sum_{q,p=0}^{Q,P}
    R^i_{qp}\partial^q\bar\partial^p\varepsilon\,,\qquad
    \delta_\varepsilon
    \lambda^\alpha=\sum_{q,p=0}^{Q,P}R^\alpha_{qp}\partial^q\bar\partial^p\varepsilon\,
\end{equation}
that leaves invariant the field equations (\ref{RNF}). Here the
gauge parameter $\varepsilon$ is assumed to be an arbitrary
function of $x$ and $\bar x$,  and the coefficients $R$'s are functions of fields $\phi^i$, $\lambda^\alpha$ and their
derivatives up to some finite order. Invariance of the field
equations means that for any choice of $\varepsilon$ one has
\begin{equation}\label{deltaT}
    \delta_{\varepsilon}T^i \approx 0\,,\qquad \delta_\varepsilon
    \overline{T}{}^J \approx 0 \,,
\end{equation}
where the sign $\approx$ means ``modulo equations of motion (\ref{RNF}) and their differential consequences''. We borrow this notation from the constrained dynamics \cite{Dirac}, \cite{HT}.

The number $Q+P$ is called the \textit{order of gauge symmetry}
if the coefficients $R^i_{QP}$ and $R^\alpha_{QP}$ are not all
equal to zero identically.  A gauge symmetry (\ref{GT}) is called
\textit{trivial} if $R^i_{qp}\approx 0$, $R^\alpha_{qp}\approx 0$,
i.e., if it has no effect upon any solution to the field
equations. The trivial gauge symmetries are present in any field theory without any material consequences.  This motivates us to define the space of \textit{nontrivial gauge symmetries} $\mathcal{G}$ as the quotient space of all gauge symmetries by the trivial ones.
A precise definition of the space $\mathcal{G}$ will be given below, but before going into details we would like to present a pair of quite general examples of the first-order gauge symmetries.

\vspace{3mm}
\noindent \textit{Example 1.} Notice that the original field equations
(\ref{PS}) are invariant under diffeomorphisms of the
integral surface $\Sigma$.  This yields  the following gauge
transformations:
\begin{equation}\label{rep}
    \delta_{\epsilon}\phi^i=\mathcal{L}_\epsilon \phi^i=\varepsilon\partial \phi^i+\bar\varepsilon\bar\partial\phi^i\,,
\end{equation}
where $\epsilon=\varepsilon(x,\bar
x)\partial+\bar\varepsilon(x,\bar x)\bar\partial$ is an arbitrary
infinitesimal vector field on $\Sigma$. These transformations can
be easily extended to the Cartan normal form (\ref{NF}). For this
end, one needs only to express the fields $\lambda^\alpha$ from
(\ref{NF}) as functions of $\phi^i$, $\partial\phi^i$, and $\bar\partial
\phi^a$. Varying the resulting expression, one then obtains the gauge
transformation $\delta_\epsilon \lambda^\alpha$ as a linear combination
of (\ref{rep}). Clearly, the value $\delta_\epsilon
\lambda^\alpha$ involves no more than the first partial
derivatives of the gauge parameters  $\varepsilon$ and
$\bar\varepsilon$. Of
course, we do not claim that the two-parameter transformation
(\ref{rep}) exhausts all gauge symmetries of the field equations
(\ref{PS}).

\vspace{3mm} \noindent \textit{Example 2.} Consider a \textit{completely integrable}
Pfaffian system (\ref{PS}). In this case, the $1$-forms $\theta^J$
generate a differentially closed ideal $\mathcal{J}$, so that
$\Omega_{ab}^J=0$ and the equations (\ref{NF}) take the
form\footnote{Clearly, one can omit the last equation without serious
consequences, as it just expresses the auxiliary fields
$\lambda$'s in terms of the original fields $\phi$'s.}
\begin{equation}\label{}
\partial \phi ^J=Z_a^J(\phi) \partial \phi^a \,,\qquad
\bar\partial \phi^J=Z_a^J(\phi) \bar\partial \phi{}^a\,,\qquad
\partial \phi^a=\lambda^a\,.
\end{equation}
It is easy to see that the system enjoys  the following gauge
symmetries:
\begin{equation}\label{GT2}
    \delta_\varepsilon \phi^a=\varepsilon^a \,,\qquad
    \delta_{\varepsilon}\phi^J=Z^J_a(\phi)\varepsilon^a\,,\qquad
    \delta_\varepsilon \lambda^a=\partial \varepsilon^a\,.
\end{equation}
Since the number of gauge parameters coincides with the number of
$\lambda$'s, one can expect that these transformations exhaust all
the gauge symmetries of the system. In the next section,  we will
show that this is so indeed. The gauge transformations (\ref{GT2})
have the following geometrical origin. The Pfaffian system, being
integrable, defines an ($n-m$)-dimensional foliation
$\mathcal{F}(N)$ of the target space. By definition,
$\theta^J|_S=0$ for any leaf $S\in \mathcal{F}(N)$ and the leaves
of $\mathcal{F}(N)$ have the maximal possible dimension among the integral
manifolds of $\theta$'s. If $n-m\geq 2$, then each two-dimensional
integral manifold $\Sigma$ has to belong to some leaf $S$ and the
gauge symmetries (\ref{GT2}) are simply induced  by the
diffeomorphisms of $S$. In particular, these diffeomorphisms
absorb the diffeomorphisms of the submanifold $\Sigma\subset N$.
The last fact allows one to write the gauge transformations
(\ref{rep}) as a specialization of (\ref{GT2}) for $\varepsilon^a=
\mathcal{L}_\epsilon\phi^a\approx \varepsilon
\lambda^a+\bar\varepsilon\bar\partial\phi^a$.

\vspace{3mm}
Now let us give a formal definition of the space of nontrivial gauge symmetries $\mathcal{G}$. This will require some algebraic background and terminology. First, we define the ring ${\mathfrak{A}}$ constituted by the real-analytic functions of \textit{finite} number of variables $\partial^q\bar\partial ^p\lambda^\alpha$ and $\partial^q\bar\partial^p\phi^i$. Since $\mathfrak{A}$ is an integrality domain,  we can introduce the field of quotients $\mathfrak{F}=\mathrm{Quat}(\mathfrak{A})$. The natural action of the partial derivatives $\partial$ and $\bar\partial$ makes $\mathfrak{F}$ into a differential field.
Denote by $ \mathfrak{R}=\mathfrak{F}[\partial,\bar \partial]$ the noncommutative ring of differential operators with coefficients in $\mathfrak{R}$. The general element of $\mathfrak{R}$ reads
$$
A=\sum_{n,m=1}^{N,M}{A}_{nm}\partial^n\bar\partial^m \,,\qquad A_{nm}\in {{\mathfrak{F}}}\,.
$$
The ring $\mathfrak{R}$ is known to be simple and Noetherian. We write $\mathfrak{R}^{n\times {m}}$ for the set of $n\times m$-matrices with entries in $\mathfrak{R}$.

Let ${\mathfrak{I}}$ denote the \textit{differential} ideal generated  by the left hand sides of the field equations $T^i$ and $\overline{T}^J$.  Since the equations (\ref{RNF}) are solved for $\partial\phi^i$ and $\bar\partial\phi^J$, the ideal ${\mathfrak{I}}$ is prime. As a consequence the quotient ring ${\mathfrak{R}}/{\mathfrak{I}}$ is an integrality domain. Replacing in the above definitions $\mathfrak{R}$ by $\mathfrak{R}/\mathfrak{J}$,  we  define the differential field of quotients ${{\mathcal{F}}}=\mathrm{Quot}({\mathfrak{R}}/{\mathfrak{I}})$, the noncommutative  ring of differential operators $\mathcal{R}=\mathcal{F}[\partial,\bar\partial]$, and the  ${\mathcal{R}}$-module $\mathcal{R}^{n\times m}$ of $n\times m$-matrices over ${\mathcal{R}}$.
We will identify the right (left)  $\mathcal{R}$-module $\mathcal{R}^n$ with $\mathcal{R}^{n\times 1}$ (resp. $\mathcal{R}^{1\times n}$) and refer to its elements as vectors (resp. covectors). Like $\mathfrak{R}$,  the ring $\mathcal{R}$ is Noetherian, so that any submodule of the free module $\mathcal{R}^n$ is finitely generated.

It follows from the definition of $\mathfrak{J}$ that any element of $\mathcal{F}$ can be uniquely represented by a  real-meromorphic function of $\phi^i$, $\bar{\partial}^q\phi^a$, and $\partial^q\bar\partial^p\lambda^\alpha$. This allows us to identify $\mathcal{F}$ with a subfield of $\mathfrak{F}$ and  $\mathcal{R}$ with a subring of $\mathfrak{R}$. In what follows these identifications will be always implied.

  Now we are ready to define the space of infinitesimal gauge transformations. Consider the universal linearization \cite{VK} of the  field equations (\ref{RNF}). It is obtained by extending the field equations (\ref{RNF}) with their variations
$$
\delta T^i=E^i_j\delta\phi^j+E^i_\alpha \delta\lambda^\alpha=0\,,\qquad \delta \overline{T}{}^J=E^J_j\delta\phi^j=0\,.
$$
Here $\delta \phi^i$ and $\delta \lambda^\alpha$ are regarded  as new unknowns. The coefficients of the new equations can be combined into a single matrix
$$
{\mathbf{E}}=\left(
    \begin{array}{cc}
      E^i_j & E^i_\alpha \\
      E^J_j & 0 \\
    \end{array}
  \right)\in \mathfrak{R}^{(n+m)\times (n+l)}\,.
$$
With account of the original field equations (\ref{RNF}), we can think of this matrix as representing an element of ${\mathcal{R}}^{(n+m)\times (n+l)}$. Similarly, the coefficients of the gauge transformation (\ref{GT}) define the  vector
\begin{equation}\label{R}
{R}=\left(
    \begin{array}{c}
      R^i \\
      R^\alpha \\
    \end{array}
  \right)\in {\mathcal{R}}^{n+l}\,,
\end{equation}
which is called the \textit{generator of a gauge transformation}. The defining condition (\ref{deltaT}) for $R$ to be a gauge symmetry generator is then equivalent to the relation ${\mathbf{E}}{R}=0$. The solutions to the last equation form a submodule of the right ${\mathcal{R}}$-module ${\mathcal{R}}^{n+l}$ and we identify this submodule with the space of gauge symmetries $\mathcal{G}$.

The ring ${{\mathcal{R}}}$ being Noetherian, the submodule of gauge trans\-for\-mations ${\mathcal{G}}\subset {\mathcal{R}}^{n+l}$ is finitely generated. Let $R_1, R_2, \ldots, R_{r_1}$ be a finite set of vectors generating  $\mathcal{G}$. We can arrange these vectors into the matrix $$
 {\mathbf{R}}=(R_1, R_2,\ldots, R_{r_1})\in {\mathcal{R}}^{(n+l)\times r_1}\,,
 $$
 so that $\mathbf{E}\mathbf{R}=0$.  Following the physical tradition, we will refer to  $\mathbf{{R}}$ as complete set of gauge symmetry generators. Given a complete set of generators,  any gauge transformation $R\in \mathcal{G}$ can be written as
$
R={\mathbf{R}}K
$ for some $K\in {\mathcal{R}}^{r_1}$. In general, it is impossible to choose the generators $R_1, R_2,\ldots, R_{r_1}$ of $\mathcal{G}$ in a linearly independent way, that is, for any choice of ${\mathbf{R}}$ there may exist a nonzero vector $Z\in {{\mathcal{R}}}^{r_1}$ such that ${\mathbf{R}}Z=0$. The null-vectors of ${\mathbf{R}}$ form a right ${\mathcal{R}}$-module $\mathrm{Syz}({\mathbf{R}})$ called the module of first syzygies. Again the syzygy module, being a submudule of a Noetherian module,  is finitely generated and one can arrange its generators into an $r_2\times r_1$-matrix ${\mathbf{Z}}$. By definition, ${\mathbf{R}}{\mathbf{Z}}=0$ and for any $Z\in \ker {\mathbf{R}}$ there exists $L\in {\mathcal{R}}^{r_2}$ such that $Z={\mathbf{K}}L$. According to the terminology adopted in the physical literature the gauge symmetries with $\mathrm{Syz}({\mathbf{R}})\neq 0$ are called \textit{reducible} and the columns  of the matrix ${\mathbf{Z}}$ are referred to as the generators of \textit{reducibility relations}.
It may happen that $\ker {\mathbf{Z}}\neq 0$, that is, the generators of reducibility relations are reducible themselves.  Then one can define the right ${\mathcal{R}}$-module $\mathrm{Syz}(\mathbf{Z})$ of second syzygies, which is also finitely generated. Iterating this construction once and again yields a free resolution of the right ${\mathcal{R}}$-module $\mathcal{G}$:
$$
\xymatrix{\cdots \ar[r]&{\mathcal{R}}^{r_2}\ar[r]^{{\mathbf{Z}}}&{\mathcal{R}}^{r_1}\ar[r]^{{\mathbf{R}}}&\mathcal{G}\ar[r]&0 }\,.
$$
In principle, the chain of syzygy modules may continue to infinity, but the general theorems on the ring of differential operators ensure the existence of a finite resolution. The minimal possible length of a finite resolution is called the global dimension of the module $\mathcal{G}$. For $D=2$ the global dimension of any finitely generated module over the ring of differential operators with coefficients in differential field is known to be bounded by 2. Hence, there exists a short exact sequence of modules\footnote{The case $r_2=0$ and $\mathbf{Z}=0$ is not excluded.}
\begin{equation}\label{MR}
\xymatrix{0 \ar[r]&{\mathcal{R}}^{r_2}\ar[r]^{{\mathbf{Z}}}&{\mathcal{R}}^{r_1}\ar[r]^{{{{\mathbf{R}}}}}&\mathcal{G}\ar[r]&0 }\,.
\end{equation}
Since the map $\mathbf{Z}$ is injective, the rows of the matrix $\mathbf{Z}$ form a basis in the space of reducibility relations for the complete set of gauge symmetry generators ${\mathbf{R}}$.

Thus, the problem of finding  gauge symmetries for a given set of equations amounts to solving  linear equations over the ring ${\mathcal{R}}$. The latter problem admits, in principle, an algorithmic solution by means of various algebraic techniques exploiting the idea of Gr\"obner bases. Our strategy, however, will be somewhat different: instead of considering  an abstract system of two-dimensional field equations we would like  to take an advantage of the Cartan normal form. As a matter of fact, the use of Cartan normal form allows us to much extent replace the study of linear equations over the ring of partial differential operators ${\mathcal{R}}={{\mathcal{F}}}[\partial, \bar\partial]$ to a similar problem for the subring of ordinary differential operators $\bar{{\mathcal{R}}}={{\mathcal{F}}}[\bar\partial]$. The latter ring enjoys special algebraic properties that considerably simplify  computation of the gauge symmetry generators as well as the reducibility relations.
Contrary to  $\mathcal{R}$, the ring of ordinary differential operators $\bar{\mathcal{R}}$ is a
left and a right principal ideal domain, meaning that every left and every right ideal can be generated by one single element. Furthermore, $\bar{\mathcal{R}}$  is a left and a right Euclidean domain, which means that we have the left and right ``division with remainder'' with respect to the order of ordinary differential operators. All these properties make the linear algebra over  $\bar{\mathcal{R}}$ to be somewhat similar to that of vector spaces. In particular, we will extensively use the following fundamental result
\begin{thm}\label{Th}
Any submodule of the free module $\bar{\mathcal{R}}^n$ is a free module of rank $\leq n$.
\end{thm}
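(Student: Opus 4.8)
The plan is to argue by induction on $n$, adapting the classical structure theory of modules over a principal ideal domain to the noncommutative setting. I regard $\bar{\mathcal{R}}^n$ as a right $\bar{\mathcal{R}}$-module, consistent with the conventions already fixed for $\mathcal{G}$, and I intend to use only two properties of $\bar{\mathcal{R}}$: it is a domain, and it is a right principal ideal domain (the Euclidean, division-with-remainder structure entering solely through the latter). For the base case $n=1$, a submodule $M\subseteq\bar{\mathcal{R}}$ is precisely a right ideal, hence principal, $M=d\bar{\mathcal{R}}$. If $d=0$ then $M=0$ is free of rank $0$; if $d\neq 0$, then the absence of zero divisors makes the map $r\mapsto dr$ injective, so $\{d\}$ is a basis and $M$ is free of rank $1$.

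For the inductive step, assume the claim for $n-1$ and take an arbitrary submodule $M\subseteq\bar{\mathcal{R}}^n$. Let $\pi\colon\bar{\mathcal{R}}^n\to\bar{\mathcal{R}}$ be projection onto the last coordinate and put $K=M\cap\ker\pi$. Then $K$ lies inside $\ker\pi\cong\bar{\mathcal{R}}^{n-1}$, so by the inductive hypothesis $K$ is free of rank $\le n-1$; meanwhile $\pi(M)\subseteq\bar{\mathcal{R}}$ is a right ideal, hence free of rank $0$ or $1$ by the base case. This produces a short exact sequence of right modules
\[
0\longrightarrow K\longrightarrow M\stackrel{\pi}{\longrightarrow}\pi(M)\longrightarrow 0.
\]

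The one step that needs genuine argument rather than mere quotation is that this sequence splits; the rest is bookkeeping. But $\pi(M)$ is free, hence projective, so a section $s\colon\pi(M)\to M$ exists and $M\cong K\oplus s(\pi(M))$. Concatenating bases of the two free summands yields a basis of $M$ of cardinality at most $(n-1)+1=n$; since $\bar{\mathcal{R}}$ is a Noetherian domain it embeds in its division ring of fractions, so this cardinality is a well-defined rank bounded by $n$, which completes the induction. I expect the splitting to be the crux, and it dissolves precisely because submodules of $\bar{\mathcal{R}}$ are principal over a domain — the very feature that sets the $D=2$ ring of ordinary differential operators apart from the genuinely higher-dimensional case. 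A constructive alternative, closer in spirit to the matrix statement announced for the Appendix, would instead arrange a generating set of $M$ as the columns of a matrix and use division with remainder in the order of operators to reduce it to triangular form, reading off a free basis directly.
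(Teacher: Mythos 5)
Your proof is correct, but it cannot be "the same as the paper's" for a simple reason: the paper offers no proof of this theorem at all. It quotes the statement as a known fundamental fact about the ring $\bar{\mathcal{R}}=\mathcal{F}[\bar\partial]$, and the closest it comes to a justification is the Appendix, where the theorem on diagonal (Jacobson) reduction over $\bar{\mathcal{R}}$ is recalled from Cohn's book and it is observed that a basis of a finitely generated submodule can be read off from the columns of $\mathbf{U}^{-1}$ in the decomposition $\mathbf{UMV}=\mathrm{diag}(1,\ldots,1,\Delta,0,\ldots,0)$. That route is exactly the "constructive alternative" you sketch in your last sentences: arrange generators as columns of a matrix and eliminate using division with remainder in the order of operators. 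Your main argument is instead the classical induction on $n$: project onto the last coordinate, note that $\pi(M)$ is a principal right ideal (hence free of rank $\le 1$), that $K=M\cap\ker\pi$ is free of rank $\le n-1$ by induction, and that the sequence $0\to K\to M\to\pi(M)\to 0$ splits because $\pi(M)$ is projective. This is sound, and you were right to flag the two genuinely noncommutative points: first, that only the domain property and one-sided principality of ideals are used (both hold for $\bar{\mathcal{R}}$, as the paper asserts); second, that the cardinality of the resulting basis is a well-defined invariant, which you secure via the embedding of the Noetherian domain $\bar{\mathcal{R}}$ into its division ring of fractions (invariant basis number). Comparing the two approaches: yours is self-contained, works for arbitrary (not a priori finitely generated) submodules, and is non-constructive at the splitting step; the paper's Jacobson-normal-form route requires a finite generating set (available anyway by Noetherianity) but is algorithmic — elementary row and column operations in the spirit of Gaussian elimination — which is what the paper actually needs in Section 3 and in the Appendix, where the same decomposition also defines the notion of rank used in the Conjectures.
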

 In other words, the submodules of $\bar{\mathcal{R}}^n$ behave like the subspaces of an $n$-dimensional vector space. In particular, each submodule admits a finite basis.  There is, however, one striking  difference: although any submodule $\mathcal{M}\subset \bar{\mathcal{R}}^n$ is isomorphic  to a free module $\bar{\mathcal{R}}^m$ with $m\leq n$, the quotient module $\bar{\mathcal{R}}^n/\mathcal{M}$ may not be free. In general, $\bar{\mathcal{R}}^n/\mathcal{M}\simeq \bar{\mathcal{R}}^{n-m}\oplus \mathcal{T}$, where $\mathcal{T}$ is a torsion module. Stated differently, a submodule $\mathcal{M}\subset \bar{\mathcal{R}}^n$ may not admit a complimentary submodule, that is, a submodule $\mathcal{N}\subset \bar{\mathcal{R}}^n$ such that $\bar{\mathcal{R}}^n=\mathcal{M}\oplus \mathcal{N}$. It is the presence of torsion (or absence of a complimentary submodule) that makes the main difference between the submodules of $\bar{\mathcal{R}}^n$ and the vector subspaces.  It is also an underlying  reason for appearance of unavoidable reducibility relations in 2D gauge theories.


In order to stress the role of the subring  $\bar{\mathcal{R}}\subset \mathcal{R}$ we will interpret the ring $\mathcal{R}$ as
$\mathcal{R}=\bar{\mathcal{R}}[\partial]$ and write its general element in the form
$$
  a=a_k\partial^{k}+a_{k-1}\partial^{k-1}+\cdots+ a_0\,,\qquad a_i\in \bar{\mathcal{R}}\,.
$$
If $a_k\neq 0$ we say that $a$ has \textit{degree} $k$ and write ${\mathrm{deg}}\,a=k$. For the sake of uniformity we also put $\deg\, 0=-\infty$. Then $\deg (ab)=\deg\, a+\deg\, b$. Clearly, the elements of degree $\leq k$ form a free $\bar{\mathcal{R}}$-module with the basis $1,\partial,\ldots,\partial^k$. Denoting this $\bar{\mathcal{R}}$-module by ${\mathcal{R}}_k$, we define the increasing filtration of $\mathcal{R}$:
$$
  \bar{\mathcal{R}}={\mathcal{R}}_0\subset {\mathcal{R}}_1\subset\cdots\subset \mathcal{R}\,,\qquad \mathcal{R}_k\cdot \mathcal{R}_l=\mathcal{R}_{k+l}\,.
$$

The notion of degree can be naturally extended from $\mathcal{R}$ to the $\mathcal{R}$-module $\mathcal{R}^{n}$ and then to any of its submodules. Namely, for any  $M=(M^i)\in \mathcal{R}^{n}$ we set
$$
\mathrm{deg}\, M= \max_{i} \mathrm{deg}\, M^i\,.
$$
Now any submodule $\mathcal{M}\subset \mathcal{R}^n$ can be endowed with an increasing filtration
$$
\mathcal{M}_0\subset \mathcal{M}_1\subset \mathcal{M}_2\subset \cdots\subset \mathcal{M}\,,\qquad \mathcal{R}_k\cdot \mathcal{M}_l\subset \mathcal{M}_{k+l}\,,
$$
where the $\bar{\mathcal{R}}$-module $\mathcal{M}_l$ is constituted by the elements of $\mathcal{M}$ of degree $\leq l$. It is important that each $\mathcal{M}_l$ admits a finite bases.

\section{Gauge symmetries and reducibility relations}\label{GS}

The key point of our approach to construction of gauge symmetry generators is an embedding of the original dynamics into a variational one. So, we start this section with description  of this embedding and discussion of its properties.

\subsection{The Pontryagin action} Given the field equations (\ref{RNF}), we introduce the new fields $\pi_i$,  $\mu_J$. The dynamics of the extended set of fields $\psi=\{\phi,\lambda, \pi, \mu\}$ are governed by the action functional
\begin{equation}\label{S}
  S[\phi,\lambda,\pi,\mu] =\int dxd\bar x\left(\pi_iT^i+\mu_J\overline{T}{}^J\right)=\int dxd\bar x\left(\pi_i\partial \phi^i - \mathcal{H} \right)\,,
\end{equation}
with
\begin{equation}\label{H}
\mathcal{H}=\pi_iZ^i-\mu_J\left( \bar \partial \phi^J-Z^J\right)\,.
\end{equation}
The functional (\ref{S}) has the form of Hamiltonian action if one treats  $x$ as time and the fields  $\phi^i$ and $\pi_i$ as the pairs of canonically conjugate variables, namely,
$$
\{\phi^i(\bar x), \pi_{j}(\bar x{}')\}=\delta^i_{j}\delta(\bar x-\bar x{}')
$$
and the Poisson brackets of all other variables vanish. In addition to $\phi$'s and $\pi$'s the Hamiltonian density (\ref{H}) depends on the fields $\lambda^\alpha$ and $\mu_J$. The equations of motion resulting from variation of (\ref{S}) read
\begin{equation}\label{HE}
  \begin{array}{ll}
  T^i=\displaystyle\frac{\delta S}{\delta \pi_i}= \partial \phi^i-\{\phi^i, H\}=0\,,\qquad & \displaystyle\displaystyle T_i= \frac{\delta S}{\delta \phi^i}= -\partial \pi_i+\{\pi_i, H\}=0\,,\\[5mm]
  \displaystyle \overline{T}{}^J=\frac{\delta S}{\delta \mu_J}=\bar\partial \phi^J-Z^J=0\,,\qquad&\displaystyle
  T_\alpha=\frac{\delta S}{\delta \lambda^\alpha}= -\sum_{n=0}^p(-\bar\partial)^n\left( Z^i_{\alpha n}\pi_i\right)=0\,,
  \end{array}
\end{equation}
where
\begin{equation}
H=\int d\bar x \,\mathcal{H}\,,\qquad Z_{\alpha n}^i=\frac{\partial Z^i}{\partial(\bar\partial^n \lambda^\alpha)}\,.
\end{equation}
 By construction, the equations in the first column (\ref{HE}) coincide with the original equations of motion (\ref{RNF}). The equations in the second column define the time evolution of the momenta.
 Variation by the fields $\mu_J$ and  $\lambda^\alpha$ yields the equations in the second line  (\ref{HE}). These, being independent of  the time derivatives of fields, can be regarded as (differential) constraints on the initial values of $\phi^i$, $\pi_j$, and $\lambda^\alpha$.

 In such a way the dynamics of the fields $\phi$'s and $\lambda$'s appear to be embedded into the wider  dynamics governed by the least action principle. Unfortunately, there is no commonly accepted name for this embedding.  In the mathematical theory of optimal control \cite{AS}, the action (\ref{S}) was  introduced by Lev Pontryagin in connection with his famous ``Pontryagin Minimum Principle''. For this reason we  will refer to (\ref{S}) as the \textit{Pontryagin action}. A characteristic feature of the Pontryagin action is its linear dependence of momenta.  In optimal control theory the fields $\lambda^\alpha$ and $\phi^i$ are referred to as the control and the state variables, respectively, and {minimization} of the functional (\ref{S}) solves the time-optimal problem.

 Yet another interpretation of (\ref{S}) is possible if one  starts from the equations in the ``strong'' normal form (\ref{NF}). In that case both $\mu_J$ and $\lambda^\alpha$ enter the Hamiltonian (\ref{H}) linearly, that is, as the Lagrange multipliers to the (primary) Hamiltonian constraints on $\phi$'s and $\pi$'s, and the Pontryagin action (\ref{S}) describes constrained Hamiltonian dynamics in Dirac's sense \cite{Dirac}, \cite{HT}.

 By an abuse of terminology, we will refer to the equations in the second line of (\ref{HE}) as Hamiltonian constraints, even though the functions $T_\alpha$ depend on the ``Lagrange multipliers'' $\lambda^\alpha$.\footnote{This terminology is justified by considering the equations $T_\alpha=0$ as linear constraints on the momenta $\pi$'s, not on $\phi$'s or $\lambda$'s. Then the same treatment applies to all the secondary constraints. Linearity in momenta ensures the existence of a solution, e.g. the constraints are satisfied by  $\pi_i=0$. It is the solutions with $\pi_i=0$ and $\mu_J=0$ that are naturally identified with the solutions to the original equations (\ref{RNF}).}

Now we claim that any gauge symmetry of the original equations (\ref{RNF}) gives rise to a gauge symmetry of the Pontryagin action (\ref{S}).  Indeed, in view of regularity of the field  equations (\ref{RNF}) the weak equalities (\ref{deltaT})  can be written as the ``strong'' ones
\begin{equation}\label{}
  \delta_\varepsilon T^i= {A}^i_j T^j+ {B}^i_J \overline{T}{}^J\,,\qquad \delta_\varepsilon \overline{T}{}^J= {C}^J_iT^i + {D}^J_I\overline{T}{}^I
\end{equation}
for some matrix differential operators ${A}$, ${B}$, ${C}$, and ${D}$ with coefficients depending on $\phi$'s, $\lambda$'s, $\varepsilon$ and their derivatives. More explicitly,
$$
  A^i_j=\sum_{q,p,n,m}(\partial^q\bar\partial{}^p\varepsilon) A_{qpnm}{}^i_j\partial^n\bar\partial{}^m\,, \qquad A_{qpnm}{}^i_j\in \mathfrak{F}\,,
$$
and the similar structure is assumed for the other operators $B$, $C$, and $D$. Let us define the formal transpose of a matrix differential operator ${U}=({U}_\alpha^A)\in \mathfrak{R}^{n\times m}$ as a unique  operator ${U}^\ast=({U}^{\ast A}_\alpha)\in \mathfrak{R}^{m\times n}$ satisfying the condition
\begin{equation}\label{dual}
  \int dxd\bar x \left(\eta_A {U}_\alpha^A\xi^\alpha\right)= \int dxd\bar x\left( \xi^\alpha {U}^{\ast A}_\alpha \eta_A \right)
\end{equation}
for any  compactly supported functions $\eta$'s and $\xi$'s.  With the definitions above we can extend the gauge symmetry (\ref{GT}) of the equations (\ref{RNF}) to that of the action (\ref{S}) by setting
\begin{equation}\label{GT-EXT}
  \delta_\varepsilon \pi_i=-{A}^{\ast j}_i\pi_j-{C}_i^{\ast J}\mu_J\,,\qquad \delta_{\varepsilon} \mu_J=-{B}^{\ast i}_J\pi_i- {D}^{\ast I}_J\mu_I\,.
\end{equation}
Clearly,   $\delta_\varepsilon S=0$ for any  $\varepsilon$ with compact support. The converse is also true: any gauge transformation of the form (\ref{GT}),  (\ref{GT-EXT}) that leaves invariant the action (\ref{S}) defines a gauge symmetry of the original equations (\ref{RNF}).

By Noether's second theorem, there is a one-to-one correspondence between the gauge symmetries of an action functional and the differential identities for the corresponding equations of motion.  Indeed, if
 \begin{equation}\label{GTS}
 \delta_\varepsilon \psi^A={R}_\alpha^A\varepsilon^\alpha
 \end{equation}
 is a family of gauge symmetries of the action (\ref{S}), then by definition
 \begin{equation}\label{GI}
 {R}_\alpha^{\ast A}\frac{\delta S}{\delta \psi^A}=0\,,
 \end{equation}
   and the last equality is nothing else but the Noether identity for the constrained Hamiltonian equations (\ref{HE}). Conversely, given the identity (\ref{GI}), the variation (\ref{GTS}) defines a gauge symmetry transformation with $R=(R^\ast)^\ast$.

  Thus, the problem of finding gauge transformations for the field equation (\ref{RNF}) can be replaced by that of finding the Noether identities for the  extended system of equations (\ref{HE}). Some of these identities come from the gauge symmetries of the original field equations, but the other do not. The definition of the Cartan normal form implies the compatibility condition (\ref{CC}) to hold. Regarding this condition as a set of the Noether identities for the equations (\ref{HE}), we get the following gauge transformations of the Pontryagin action:
  \begin{equation}\label{NGT}
  \delta_\varepsilon \mu_J=\partial \varepsilon_J+ U_J^{\ast I}\varepsilon_I\,,\qquad\delta_\varepsilon \pi_i=V_i^{\ast J}\varepsilon_J\,,\qquad \delta_{\varepsilon}\phi^i=0\,,\qquad \delta_{\varepsilon}\lambda^\alpha=0\,.
  \end{equation}
These transformations, having no effect on the original fields, do not correspond to gauge symmetries of the original theory.  As we will see in a moment, relations (\ref{GT}),(\ref{GT-EXT}) and (\ref{NGT}) exhaust all the gauge symmetries of the Pontryagin action.

 \subsection{The Noether identities} Notice that the Hamiltonian equations in the first  line (\ref{HE}) are solved for the time derivatives of the phase-space variables  $\phi^i$ and $\pi_j$. Therefore these equations are linear independent by themselves.  At the same time the Hamiltonian constraints in (\ref{HE}) contain no time derivatives of the phase-space variables at all. This implies that any Noether identity for (\ref{HE}) must necessarily involve time derivatives of the Hamiltonian constraints. According to (\ref{CC}) the time derivatives of the constraints $\overline{T}{}^J$ reduce immediately to linear combinations of the original equations (\ref{RNF}) and this yields the gauge transformations (\ref{NGT}). So, when looking for the gauge transformations of the original field equations (\ref{RNF}), one can disregard the Noether identities involving the time derivatives of $\overline{T}{}^J$. The time derivatives of the remaining Hamiltonian constraints are given by
\begin{equation}\label{dpat}
\partial T_\alpha \approx \sum_n\frac{\partial T_\alpha}{\partial(\bar\partial^n\lambda_\beta)}\partial\bar\partial^n\lambda^\beta+\{T_\alpha, H\}\,.
\end{equation}
As above, the sign of weak equality means ``modulo equations of motion'' (\ref{HE}).
To write down the general expression for the $k$-th time derivative we introduce the operator
$$
D=\partial_\lambda-\{T,\;\cdot\; \}\,, \qquad T=\int d\bar x \pi_i Z^i\,,
$$
where $\partial_\lambda$ denotes  the action of $\partial$ on $\lambda$'s.
By induction on $k$, one can see that
\begin{equation}\label{DDT}
\partial^k T_\alpha \approx D^k T_\alpha\,.
\end{equation}
Here we took into account that
\begin{equation}\label{TT}
\{T_\alpha (\bar x),\overline{T}^J(\bar x{}')\}\approx 0
\end{equation}
as a consequence of the compatibility condition (\ref{CC}). This allows us to omit the constraints $\overline{T}^J$ in the Hamiltonian and replace $H$ by $T$ in (\ref{dpat}).

Notice that the expression in the right hand side of (\ref{DDT}) does not involve the time derivatives of $\phi$'s and $\pi$'s, though it depends on the time derivatives of $\lambda$'s. Another peculiar property of the functions $D^k T_\alpha$ is their linear dependence of $\pi$'s. This is a simple consequence of the fact that the canonical Poisson brackets of linear in momenta function(al)s are again linear in momenta.  So, we have
\begin{equation}\label{DnT}
  D^k T_\alpha=\sum_{m=0}^M T_{\alpha m}^{ik}\bar\partial^m \pi_i\,,
\end{equation}
   where the coefficients $T^{ik}_{\alpha m}$ are real-meromorphic functions of finite number of arguments $\phi^i$,  $\bar \partial^k \phi^a$, and $\partial^p\bar\partial^q\lambda^\alpha$, that is,  $T^{ik}_{\alpha m}\in \mathcal{F}$.

Thus, we are  led to the conclusion that the existence of Noether's identities for the Hamiltonian equations (\ref{HE}) amounts  to the existence of identities for the successive time derivatives (\ref{DDT}), (\ref{DnT}) of the Hamiltonian constraints $T_\alpha$.  The latter identities have the form
\begin{equation}\label{BDT}
\sum_{k=0}^K {M}_k^\alpha D^k T_\alpha \approx 0 \,,\qquad M_k^\alpha\in \bar{\mathcal{R}}\,.
\end{equation}
A ``strong'' form of  the weak equalities (\ref{DDT}) and (\ref{BDT}) is
\begin{equation}\label{SF}
D^kT_\alpha=\partial^kT_\alpha + {K}^{ki}_\alpha T_i + {L}^k_{\alpha i}T^i+Q^k_{\alpha J} {\overline{T}}^J\,,\qquad \sum_{k=0}^K {M}_k^\alpha D^k T_\alpha =P_J \overline{T}{}^J\,.
\end{equation}
Here ${K}$, ${L}$, $Q$, and $P$ are given by some differential operators in $x$ and $\bar x$ with coefficients depending on $\phi$, $\lambda$, $\pi$, $\mu$, and their derivatives.   Combining relations (\ref{SF}), we get the  Noether identity
\begin{equation}\label{NI}
\sum_{k=0}^K {M}^\alpha_k\partial^k T_\alpha =  ({P}_J - {M}^\alpha_k Q_{\alpha J}^k )\overline{T}{}^J - {M}^\alpha_k {K}^{ki}_\alpha T_i-{M}^\alpha_k {L}^k_{\alpha i}T^i\,.
\end{equation}
In accordance with (\ref{GTS}) and (\ref{GI})  the corresponding gauge transformation reads
\begin{equation}\label{GaugeTr}
\begin{array}{ll}
\delta_{\varepsilon}\phi^i={K}^{\ast k i}_\alpha {M}^{\ast \alpha}_k\varepsilon\,,\qquad&\displaystyle  \delta_{\varepsilon} \lambda^\alpha = \sum_{k=0}^K (-\partial)^k {M}^{\ast \alpha}_k \varepsilon\,,\\[5mm]
\delta_{\varepsilon}\pi_i={L}_{\alpha i}^{\ast k}{M}_k^{\ast \alpha}\varepsilon \,,\qquad & \delta_\varepsilon \mu_J=  (Q_{\alpha J}^{\ast k}{M}^{\ast \alpha}_k -P^\ast_J)\varepsilon\,.
\end{array}
\end{equation}
As we have explained above the transformations in the first line define a gauge symmetry of the original equations (\ref{RNF}).

\subsection{The resolution}\label{res} The challenge now is to construct a complete set of identities (\ref{BDT}). This will require some algebraic technique.

Let us think of the momenta $\pi_i$ as a bases of the left $\bar{\mathcal{R}}$-module $\bar{\mathcal{R}}^n$. Then the Hamiltonian constraints $T_\alpha$, being linear in $\pi$'s, generate a submodule of $\bar{\mathcal{R}}^n$, which we denote by $\mathcal{T}_0$. Starting from $\mathcal{T}_0$ we define by induction the sequence of left $\bar{\mathcal{R}}$-modules
\begin{equation}\label{Tf}
  \mathcal{T}_{k+1}=\mathcal{T}_k \cup  D \mathcal{T}_k\,.
\end{equation}
According to this definition
\begin{equation}\label{}
  \mathcal{T}_k=\mathrm{span}_{\bar{\mathcal{R}}}\big\{T_\alpha, D T_{\alpha}, \ldots , D^k T_{\alpha}\big\}\,.
\end{equation}
  By Theorem \ref{Th}, each module $\mathcal{T}_k$ is free and admits a finite basis. It is clear that $$\mathrm{rank} \,\mathcal{T}_k\leq \mathrm{rank}\, \mathcal{T}_{k+1}\leq n\,.$$
Since the module $\bar{\mathcal{R}}^n$ is Noetherian, the ascending chain of its submodules
\begin{equation}\label{asc}
  \mathcal{T}_0\subset \mathcal{T}_1\subset \mathcal{T}_2\subset\cdots \subset \bar{\mathcal{R}}^n
\end{equation}
eventually stabilizes, i.e., there exists an integer  $K$ such that $\mathcal{T}_{K}=\mathcal{T}_{K+1}\equiv \mathcal{T}$.

Formula (\ref{DDT}) defines the natural action  of the operator $\partial$ in $\mathcal{T}$, so that we  can think of $\mathcal{T}$ as a differential module, or still better, as the left $\mathcal{R}$-module with the generating set  ${\mathbf{T}}=\{T_\alpha\}$.
Let $\mathcal{M}=\mathrm{Syz}({\mathbf{T}})$ denote the corresponding syzygy module. By definition, $\mathcal{M}$ consists of covectors $M\in \mathcal{R}^l$
satisfying the condition
$$
M^\alpha T_\alpha=0\,.
$$
It is the syzygy module $\mathcal{M}$ that describes all the linear relations between the constraints (\ref{BDT}). The module $\mathcal{M}$ enjoys the increasing filtration
$$
\mathcal{M}_0\subset \mathcal{M}_1\subset \mathcal{M}_2\subset \cdots \subset \mathcal{M}\,,\qquad \mathcal{R}_l\cdot \mathcal{M}_k\subset \mathcal{M}_{k+l}\,,
$$
where the left $\bar{\mathcal{R}}$-module $\mathcal{M}_k$ consists of the syzygies of degree $\leq k$.

Stabilization of the sequence (\ref{asc}) at $K$-th term  implies that
$$
\partial^{K+1}T_\alpha = {M}_{K \alpha}^\beta \partial^{K}T_\beta+{M}_{K-1, \alpha}^\beta \partial^{K-1}T_\beta+\cdots +{M}_{0 \alpha}^\beta T_\beta
$$
for some ${M}^\alpha_{k \beta}\in \bar{\mathcal{R}}$. In other words, the covectors $M_\alpha\in \mathcal{R}^l$ with components
\begin{equation}\label{M}
M_\alpha^\beta=\delta_\alpha^\beta\partial^{K+1}-{M}_{K \alpha}^\beta \partial^{K}-{M}_{K-1, \alpha}^\beta \partial^{K-1}-\cdots -{M}_{0 \alpha}^\beta
\end{equation}
define a set of $l$ syzygies of degree $K+1$.

\begin{prop}
For any $N=(N^\alpha)\in {\mathcal{R}}^l$ we have $\deg(N^\alpha M_\alpha)=K+1 + \deg N$.
\end{prop}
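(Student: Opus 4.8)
The plan is to read off the conclusion from the leading-order (in $\partial$) structure of the syzygies~(\ref{M}). Regarded as elements of $\mathcal{R}=\bar{\mathcal{R}}[\partial]$, the entries $M_\alpha^\beta$ have a rigid leading term: from~(\ref{M}) the coefficient of the top power $\partial^{K+1}$ is $\delta_\alpha^\beta$, so that $\deg M_\alpha^\alpha=K+1$ with leading coefficient $1\in\bar{\mathcal{R}}$, while $\deg M_\alpha^\beta\le K$ whenever $\beta\neq\alpha$. In other words, the matrix of top-order symbols of $(M_\alpha^\beta)$ is the identity. This single observation drives the whole argument. We may assume $N\neq 0$, the case $N=0$ being covered by the convention $\deg 0=-\infty$.

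First I would establish the upper bound $\deg(N^\alpha M_\alpha)\le \deg N+K+1$. Write $d=\deg N$. The $\beta$-component of the covector $N^\alpha M_\alpha$ is $\sum_\alpha N^\alpha M_\alpha^\beta$, and since the identity $\deg(ab)=\deg a+\deg b$ holds in $\mathcal{R}$ we have $\deg(N^\alpha M_\alpha^\beta)=\deg N^\alpha+\deg M_\alpha^\beta\le d+(K+1)$ for every $\alpha$ and $\beta$. Taking the maximum over $\alpha$ bounds the degree of each component by $d+K+1$, and taking the maximum over $\beta$ gives $\deg(N^\alpha M_\alpha)\le d+K+1$.

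For sharpness I would exhibit a single component in which the bound is attained without cancellation. Choose an index $\beta_0$ with $\deg N^{\beta_0}=d$, and let $n^{\beta_0}\in\bar{\mathcal{R}}$ be the (nonzero) leading coefficient of $N^{\beta_0}$ in $\partial$. In the $\beta_0$-component the diagonal term $N^{\beta_0}M_{\beta_0}^{\beta_0}$ has degree $d+K+1$, and because multiplication in $\mathcal{R}$ multiplies leading coefficients, its coefficient of $\partial^{d+K+1}$ equals $n^{\beta_0}\cdot 1=n^{\beta_0}$. Every remaining term $N^\alpha M_\alpha^{\beta_0}$ with $\alpha\neq\beta_0$ has degree $\deg N^\alpha+\deg M_\alpha^{\beta_0}\le d+K$, hence contributes nothing to the power $\partial^{d+K+1}$. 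Thus the $\beta_0$-component has leading coefficient $n^{\beta_0}$ in degree $d+K+1$; since $\bar{\mathcal{R}}$ is a domain this is nonzero, so that component already has degree exactly $d+K+1$. Combined with the upper bound this yields $\deg(N^\alpha M_\alpha)=\deg N+K+1$.

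The only point demanding care, and the one I would flag as the potential obstacle, is the absence of cancellation in the top power of $\partial$. This is guaranteed by the identity leading-symbol structure: in the chosen component the maximal degree $d+K+1$ is produced by exactly one summand (the diagonal one), so nothing can cancel its nonzero leading coefficient. I would also note that the noncommutativity of $\mathcal{R}=\bar{\mathcal{R}}[\partial]$, coming from $[\partial,f]=\partial(f)$ for $f\in\mathcal{F}$, affects only lower powers of $\partial$ and therefore leaves the leading-coefficient bookkeeping---hence both $\deg(ab)=\deg a+\deg b$ and the product rule for leading coefficients---intact.
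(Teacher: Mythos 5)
Your proof is correct. The paper actually leaves this proposition to the reader as ``straightforward,'' and your argument --- reading off from (\ref{M}) that the matrix $(M_\alpha^\beta)$ has unit leading symbol $\delta_\alpha^\beta$ in degree $K+1$ and off-diagonal entries of degree $\le K$, so that in the $\beta_0$-component (with $\deg N^{\beta_0}=\deg N$) the single diagonal term carries the uncancellable leading coefficient $n^{\beta_0}\cdot 1\neq 0$ --- is precisely the intended straightforward proof.
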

The proof is straightforward  and we leave it to the reader.
As an immediate consequence of the proposition above we have

\begin{cor}\label{cor}
The covectors $M_\alpha$ are linearly independent over $\mathcal{R}$. Furthermore, there are no nontrivial linear combinations of $M_\alpha$ of degree $< K+1$, that is, $N^\alpha M_\alpha \in \mathcal{M}_K$ implies $N = 0$.
\end{cor}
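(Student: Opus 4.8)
The plan is to derive both assertions purely from the degree formula supplied by the Proposition, exploiting the graded structure of $\mathcal{R}=\bar{\mathcal{R}}[\partial]$ together with the convention $\deg 0=-\infty$. Recall that for a covector $N=(N^\alpha)\in\mathcal{R}^l$ one sets $\deg N=\max_\alpha \deg N^\alpha$, so that $\deg N\geq 0$ whenever $N\neq 0$ and $\deg N=-\infty$ precisely when $N=0$. The Proposition asserts $\deg(N^\alpha M_\alpha)=K+1+\deg N$ for every such $N$, and this single identity controls everything that follows.

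First I would establish linear independence. Suppose $N^\alpha M_\alpha=0$ for some $N\in\mathcal{R}^l$. Then the left-hand side has degree $-\infty$, while the Proposition evaluates its degree as $K+1+\deg N$. Since $K+1$ is a finite nonnegative integer, the equality $K+1+\deg N=-\infty$ can hold only if $\deg N=-\infty$, i.e. $N=0$. Hence no nontrivial $\mathcal{R}$-linear combination of the $M_\alpha$ vanishes, which is exactly linear independence over $\mathcal{R}$.

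For the second assertion I would run the same argument with the value $-\infty$ replaced by the bound $K$. Observe first that $N^\alpha M_\alpha$ is automatically a syzygy, so membership in $\mathcal{M}_K$ is equivalent to the degree bound $\deg(N^\alpha M_\alpha)\leq K$. The Proposition turns this into $K+1+\deg N\leq K$, that is $\deg N\leq -1$. Because the degree of any covector is either a nonnegative integer or $-\infty$, the inequality $\deg N\leq -1$ forces $\deg N=-\infty$, whence $N=0$. Note that this statement in fact subsumes the first, since $0\in\mathcal{M}_K$.

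There is essentially no obstacle to overcome here: the entire content resides in the Proposition, and the corollary is a one-line degree count once that formula is in hand. The only point demanding the slightest care is the bookkeeping of the symbol $-\infty$—specifically the conventions $K+1+(-\infty)=-\infty$ and the equivalence of $\deg N\leq -1$ with $N=0$—which is precisely what makes both conclusions drop out immediately from the filtered structure of $\mathcal{R}=\bar{\mathcal{R}}[\partial]$.
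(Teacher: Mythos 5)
Your proof is correct and matches the paper's approach: the paper presents this corollary as an immediate consequence of the degree formula $\deg(N^\alpha M_\alpha)=K+1+\deg N$, and your degree count (with the $\deg 0=-\infty$ convention handled properly) is exactly the argument the paper leaves implicit. Your observation that the second assertion subsumes the first is a nice touch.
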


Of course, the $\bar{\mathcal{R}}$-module $\mathcal{M}_{K+1}$ may also contain elements of degree $\leq K$, which constitute the module $\mathcal{M}_{K}$. Again, by Theorem \ref{Th} the module $\mathcal{M}_K$ is free and finitely generated. Writing $\{M_A\}$ for a  basis in $\mathcal{M}_K$, we set ${\mathbf{M}}=\{M_\alpha\}\cup\{ M_A\}$ and denote by $r_1$ the number of elements of ${\mathbf{M}}$. By definition, $r_1\geq l$.

\begin{prop}
${\mathbf{M}}$ is a generating set of the $\mathcal{R}$-module $\mathcal{M}$ and a basis of the $\bar{\mathcal{R}}$-module $\mathcal{M}_{K+1}$.
\end{prop}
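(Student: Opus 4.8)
The plan is to exploit the single structural fact that makes the distinguished syzygies $M_\alpha$ special: by \eqref{M} the coefficient of $\partial^{K+1}$ in $M_\alpha^\beta$ is exactly $\delta_\alpha^\beta$, so the top-order (in $\partial$) coefficient matrix of the family $\{M_\alpha\}$ is the identity. This makes the $\partial$-degree behave like a monomial order and permits a division-with-remainder on $\mathcal{M}$. The key step I would isolate first is the following reduction. If $M=(M^\beta)\in\mathcal{M}$ has degree $d\geq K+1$, write $M^\beta=M^\beta_{(d)}\partial^d+(\text{lower order in }\partial)$ with $M^\beta_{(d)}\in\bar{\mathcal{R}}$, and put $N^\alpha=M^\alpha_{(d)}\partial^{d-K-1}\in\mathcal{R}$. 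Since the $\delta_\alpha^\beta$ are constants, left multiplication by $\partial^{d-K-1}$ sends the top term of $M_\alpha^\beta$ to $\delta_\alpha^\beta\partial^d$ and produces only terms of degree $\leq d-1$ from the rest; hence the $\partial^d$-coefficient of $N^\alpha M_\alpha$ is $\sum_\alpha M^\alpha_{(d)}\delta_\alpha^\beta=M^\beta_{(d)}$, so that $M-N^\alpha M_\alpha\in\mathcal{M}_{d-1}$.

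Granting this reduction, $\mathcal{R}$-generation follows by induction on $d=\deg M$. For $d\leq K$ we have $M\in\mathcal{M}_K$, which by construction is the $\bar{\mathcal{R}}$-linear span of $\{M_A\}$. For $d\geq K+1$, one application of the reduction replaces $M$ by $M-N^\alpha M_\alpha\in\mathcal{M}_{d-1}$, while $N^\alpha M_\alpha$ is manifestly an $\mathcal{R}$-combination of the $M_\alpha$; the inductive hypothesis then expresses the remainder through $\mathbf{M}$. Thus every $M\in\mathcal{M}$ is an $\mathcal{R}$-combination of $\mathbf{M}=\{M_\alpha\}\cup\{M_A\}$, i.e. $\mathbf{M}$ generates $\mathcal{M}$ over $\mathcal{R}$.

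For the second assertion I would first note $\mathbf{M}\subset\mathcal{M}_{K+1}$, since $\deg M_\alpha=K+1$ and $\deg M_A\leq K$. Spanning of $\mathcal{M}_{K+1}$ over $\bar{\mathcal{R}}$ is the degree-$(K+1)$ instance of the reduction: for $M\in\mathcal{M}_{K+1}$ the coefficients $N^\alpha=M^\alpha_{(K+1)}$ already lie in $\bar{\mathcal{R}}$, so $M-M^\alpha_{(K+1)}M_\alpha\in\mathcal{M}_K$, the $\bar{\mathcal{R}}$-span of $\{M_A\}$, whence $M$ lies in the $\bar{\mathcal{R}}$-span of $\mathbf{M}$. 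For $\bar{\mathcal{R}}$-independence, suppose $c^\alpha M_\alpha+c^A M_A=0$ with $c^\alpha,c^A\in\bar{\mathcal{R}}$. Elements of $\bar{\mathcal{R}}$ have $\partial$-degree $0$, so by the preceding Proposition $\deg(c^\alpha M_\alpha)=K+1$ whenever $(c^\alpha)\neq 0$, while $\deg(c^A M_A)\leq K$; these degrees cannot cancel, forcing $c^\alpha=0$, and then $c^A M_A=0$ forces $c^A=0$ because $\{M_A\}$ is a basis of $\mathcal{M}_K$. Hence $\mathbf{M}$ is a $\bar{\mathcal{R}}$-basis of $\mathcal{M}_{K+1}$.

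The only genuinely delicate point is the bookkeeping of non-commutativity: I must verify that left multiplication by powers of $\partial$ and by coefficients in $\bar{\mathcal{R}}$ shifts the $\partial$-degree by exactly the expected amount and transforms the leading coefficient as claimed. This is precisely where it matters that the $\delta_\alpha^\beta$ in \eqref{M} are constants, so that they commute with $\partial$ and are annihilated by it; everything else needed—the finite $\bar{\mathcal{R}}$-basis $\{M_A\}$ of $\mathcal{M}_K$ supplied by Theorem \ref{Th}, and the degree formula of the preceding Proposition—has already been established, so no further obstruction arises.
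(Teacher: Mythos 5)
Your proposal is correct and takes essentially the same route as the paper's own proof: the same division-with-remainder on the $\partial$-degree (subtracting $M^{\alpha}_{(d)}\partial^{\,d-K-1}M_\alpha$ to lower the degree, exactly the paper's $M'=M-\bar B^\alpha\partial^{L-K-1}M_\alpha$), the same induction giving $\mathcal{R}$-generation and $\bar{\mathcal{R}}$-spanning of $\mathcal{M}_{K+1}$, and the same independence argument, since Corollary \ref{cor}, which the paper invokes, is just the degree formula you use specialized to coefficients of degree zero, combined with freeness of $\mathcal{M}_K$. Your explicit bookkeeping of the non-commutativity (that left multiplication by $\partial^{\,d-K-1}$ hits the constant leading coefficients $\delta_\alpha^\beta$ cleanly and only produces lower-degree terms otherwise) merely spells out what the paper leaves implicit.
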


\begin{proof}
The covectors $M\in \mathcal{M}$ of degree $\leq K$ belong to the $\mathrm{span}_{\bar{\mathcal{R}}}({\mathbf{M}})$ by the definition of ${\mathbf{M}}$. If $\mathrm{ord}\, M=L\geq K+1$, then
$$
M^\alpha=\bar B^\alpha \partial^{L}+\cdots\,,
$$
where $\bar B^\alpha\in \bar{\mathcal{R}}$ and dots stand for the terms of degree $<L$.
It is clear that the order of the covector
$$
M'=M-\bar B^\alpha\partial^{L-K-1}M_\alpha \in \mathcal{M}
$$
is less than $L$. Proceeding in this way, we obtain  a set of coefficients  $B^\alpha\in \mathcal{R}$ such that $M-B^\alpha M_\alpha\in \mathcal{M}_K$. Expanding the last difference by $M_A$, we express $M$ as a linear combination of the generators from $\mathbf{M}$.

The consideration above shows also that ${\mathbf{M}}$ generates $\mathcal{M}_{K+1}$. So, it remains to prove that the elements of ${\mathbf{M}}$ are linearly independent over $\bar{\mathcal{R}}$. By Corollary \ref{cor}, there are no nontrivial relations between $M_\alpha$ and $M_A$. On the other hand, the covectors of both the groups $\{M_\alpha\}$ and $\{M_A\}$ are obviously linearly independent among themselves.

\end{proof}

Given the generating set ${\mathbf{M}}$, we can define its syzygy module $\mathcal{N}=\mathrm{Syz}({\mathbf{M}})$. The main property of the $\mathcal{R}$-module $\mathcal{N}$ is described by the following

\begin{prop} \label{N-basis}
$\mathcal{N}$ is a free module of rank $r_2=r_1-l$.
\end{prop}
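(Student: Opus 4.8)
The plan is to exhibit an explicit free basis of $\mathcal{N}$ of cardinality $r_1-l$ and verify it directly. Recall that, by the preceding proposition, $\mathbf{M}=\{M_\alpha\}\cup\{M_A\}$ is simultaneously a generating set of the $\mathcal{R}$-module $\mathcal{M}$ and a basis of the $\bar{\mathcal{R}}$-module $\mathcal{M}_{K+1}$. Each of the $r_1-l$ covectors $M_A$ has degree $\le K$, so $\partial M_A$ has degree $\le K+1$, hence lies in $\mathcal{M}_{K+1}$ and expands uniquely in the $\bar{\mathcal{R}}$-basis $\mathbf{M}$:
\[
\partial M_A=\sum_\beta \bar c^{\,\beta}_A M_\beta+\sum_B \bar d^{\,B}_A M_B\,,\qquad \bar c^{\,\beta}_A,\ \bar d^{\,B}_A\in\bar{\mathcal{R}}\,.
\]
Letting $\{e_\alpha\}\cup\{e_A\}$ be the standard basis of $\mathcal{R}^{r_1}$, with the surjection $\mathbf{M}\colon\mathcal{R}^{r_1}\to\mathcal{M}$ sending $e_\alpha\mapsto M_\alpha$ and $e_A\mapsto M_A$, this relation says that
\[
Z_A=\partial e_A-\sum_\beta \bar c^{\,\beta}_A e_\beta-\sum_B \bar d^{\,B}_A e_B\in\mathcal{N}\,.
\]
These are the candidate generators. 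Their key structural feature is that, in the block of slots indexed by the $M_B$, the leading part is $\partial\cdot\mathrm{Id}$, while every other entry sits in $\bar{\mathcal{R}}$, i.e. has $\partial$-degree $0$.

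I would first establish $\mathcal{R}$-linear independence of the $Z_A$ by a leading-term argument. Suppose $\sum_A P^A Z_A=0$ with $P^A\in\mathcal{R}$ not all zero and put $d=\max_A\deg P^A\ge 0$. Reading the $B$-slot gives $P^B\partial=\sum_A P^A\bar d^{\,B}_A$; since each $\bar d^{\,B}_A\in\bar{\mathcal{R}}$ has $\partial$-degree $0$, the right-hand side has degree $\le d$ whereas the left-hand side has degree $\deg P^B+1$. Hence $\deg P^B\le d-1$ for every $B$, contradicting $d=\max_B\deg P^B$. Thus all $P^B=0$ and the $Z_A$ are independent over $\mathcal{R}$.

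The substantive step, and the one I expect to be the main obstacle, is that the $Z_A$ generate all of $\mathcal{N}$. Given an arbitrary syzygy $s=(N^\alpha,N^A)$ with $N^\alpha M_\alpha+N^A M_A=0$, I would run a division against the $Z_A$ to lower the degrees of the $N^A$: if $d=\max_A\deg N^A\ge 1$ and $\bar N^A\in\bar{\mathcal{R}}$ is the degree-$d$ leading coefficient of $N^A$, then $s-\sum_A\bar N^A\partial^{d-1}Z_A$ is again in $\mathcal{N}$ and, because the $B$-slot of $Z_A$ has leading term $\delta^B_A\partial$, its top-degree part cancels in every $B$-slot. Iterating reduces to a syzygy $(\tilde N^\alpha,\tilde N^A)$ with all $\tilde N^A\in\bar{\mathcal{R}}$. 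Then $\deg(\tilde N^A M_A)\le K$, so $\deg(\tilde N^\alpha M_\alpha)\le K$; but the preceding degree proposition equates this with $K+1+\deg\tilde N$ whenever $\tilde N\ne0$, forcing $\tilde N^\alpha=0$. The relation collapses to $\tilde N^A M_A=0$ with $\tilde N^A\in\bar{\mathcal{R}}$, and linear independence of the $M_A$ over $\bar{\mathcal{R}}$ (they form a basis of $\mathcal{M}_K$) gives $\tilde N^A=0$. Hence $s$ lies in the $\mathcal{R}$-span of the $Z_A$, and together with independence this proves $\mathcal{N}$ is free with basis $\{Z_A\}$ of rank $r_1-l=r_2$. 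The only delicate points are confirming that the reduction terminates and that the bottom case is genuinely pinned down by the two auxiliary facts (the degree formula and the $\bar{\mathcal{R}}$-independence of the $M_A$); note that the growth of the $\tilde N^\alpha$-entries during reduction is harmless, since the closing argument bounds $\deg(\tilde N^\alpha M_\alpha)$ directly rather than $\deg\tilde N^\alpha$.
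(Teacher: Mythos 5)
Your proposal is correct and follows essentially the same route as the paper: you take the same candidate syzygies $Z_A$ (the paper's $N_A=(-K_A^\alpha,\ \delta_A^B\partial-L_A^B)$ obtained by expanding $\partial M_A$ in the $\bar{\mathcal{R}}$-basis $\mathbf{M}$ of $\mathcal{M}_{K+1}$), prove generation by the same degree-lowering division in the $B$-slots followed by the degree estimate on $\tilde N^\alpha M_\alpha$ (the paper's Corollary) plus $\bar{\mathcal{R}}$-independence of the $M_A$, and prove independence by the same leading-term comparison $\deg(P^B\partial)=\deg P^B+1$. The only differences are cosmetic: ordering of the two steps and spelling out details the paper leaves implicit.
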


\begin{proof}
By definition, $\partial M_A\in \mathcal{M}_{K+1}$. Using the basis $\mathbf{M}=\{M_\alpha\}\cup\{ M_A\}$ in $\mathcal{M}_{K+1}$, we can write
\begin{equation}\label{N-syz}
\partial M_A =K_A^\alpha M_\alpha+L_A^B M_B
\end{equation}
for some $K_A^\alpha, L_A^B\in \bar{\mathcal{R}}$. It is clear that the coefficients $K$ and $L$ are uniquely defined by (\ref{N-syz}) and $K_A^\alpha = 0$ whenever $\mathrm{ord}\,M_A <K$. Relation (\ref{N-syz}) defines the set $\mathbf{N}=\{N_A\}$ of $m$ syzygies of $\mathbf{M}$,
$$
N_A =(-K_A^\alpha, \delta_A^B\partial-L_A^B)\,,\qquad \deg\, N_A=1\,.
$$

We claim that the syzygies $N_A$ are linear independent over $\mathcal{R}$ and generate the module $\mathcal{N}$. Let $N=(N^\alpha, N^A)$ be a syzygy of $\mathbf{M}$, so that
$
N^\alpha M_\alpha +N^AM_A=0
$.
Subtracting from $N$ an appropriate linear combination of (\ref{N-syz}), we can always decrees the degree of the components $N^A$ to zero. In other words, there are  $B^A\in \mathcal{R}$ such that
$$
\tilde{N}=N - B^AN_A =(\tilde{N}^\alpha,  \tilde{N}^A)\in \mathcal{N}\,,
$$
where $\tilde{N}^\alpha\in \mathcal{R}$ and ${\tilde{N}}^A\in \bar{\mathcal{R}}$.
By Corollary \ref{cor}, the equality $\tilde{N}^AM_A+\tilde{N}^\alpha M_\alpha=0$ implies
 $\tilde{N}=0$, and  hence $N=B^AN_A$. This proves that $\mathbf{N}$ is a generating set of the syzygy module $\mathcal{N}$.

To prove the linear independence of $N_A$, we simply observe that $B^AN_A=0 $ implies  $K^A=B^A\partial-B^CL_C^A=0$. It is clear that $\deg\, K^A=\deg\, B^A+1$ and $K^A=0$ implies  $B^A=0$.
\end{proof}

To formulate the next proposition we recall the notion of a dual $\mathcal{R}$-module. It is obtained by abstracting the dualization procedure for differential operators (\ref{dual}) that results  from ``integration by parts''. First we define the dual of the ring $\mathcal{R}$ itself. By definition, the dualization map $\mathcal{R}\rightarrow \mathcal{R}^\ast$ is an isomorphism of the underlying $\mathbb{R}$-vector space such that
\begin{equation}\label{d-prop}
(f
\cdot \partial^n\bar\partial^m)^\ast=(-1)^{n+m}\partial^n\bar\partial^m \cdot f\qquad \forall f\in \mathcal{F}\subset \mathcal{R}\,.
\end{equation}
(The dot stands for multiplication in $\mathcal{R}$.)
It follows from the definition that
\begin{equation}\label{ab}
(a\cdot b)^\ast=b^\ast\cdot a^\ast \,,\qquad (a^\ast)^\ast=a \qquad \forall a, b\in \mathcal{R}\,.\end{equation}
Setting $(\mathcal{R}^n)^\ast=(\mathcal{R}^\ast)^n$, we extend the dualization map to the free, finitely generated $\mathcal{R}$-modules and all their submodules. As is seen from (\ref{ab}) dualization revers the order of multipliers,  turning  left $\mathcal{R}$-modules into right $\mathcal{R}$-modules, and vice versa. Furthermore, the dualization map is involutive, that is,  $(\mathcal{M}^\ast)^\ast=\mathcal{M}$ for any submodule $\mathcal{M}\subset \mathcal{R}^n$. Using the last property, one can see that the module $\mathcal{M}$ is free iff the dual module $\mathcal{M}^\ast$ is free.

Due  to the Noether correspondence  between the  gauge transformations (\ref{GTS}) and the identities (\ref{GI}) we have the following
 \begin{prop}\label{G=M}
Formulae (\ref{GaugeTr}) establish an isomorphism between the right $\mathcal{R}$-module of the gauge symmetries $\mathcal{G}$ and the dual of the left $\mathcal{R}$-module $\mathcal{M}=\mathrm{Syz}({\mathbf{T}})$, i.e., $\mathcal{G}\simeq \mathcal{M}^\ast$.
\end{prop}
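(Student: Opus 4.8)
The plan is to assemble the two correspondences already established — Noether's second theorem relating the gauge symmetries of the Pontryagin action to its Noether identities, and the identification in Section 3.2 of the relevant Noether identities with the syzygies of the constraints $T_\alpha$ — and then to account for the passage from the left module $\mathcal{M}$ to the right module $\mathcal{G}$ through the dualization map $(\cdot)^\ast$.

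First I would make the map explicit. To a syzygy $M=(M^\alpha)\in\mathcal{M}$, i.e. a covector with $M^\alpha T_\alpha=0$, formulae (\ref{SF}) and (\ref{NI}) associate a Noether identity of the Hamiltonian equations (\ref{HE}), and the Noether correspondence (\ref{GTS})--(\ref{GI}) associates to it a gauge symmetry of the Pontryagin action with components (\ref{GaugeTr}). Reading off the first line of (\ref{GaugeTr}) gives an element $R\in\mathcal{G}$, the gauge generator of the original equations (\ref{RNF}). Writing $M^\alpha=\sum_k M^\alpha_k\partial^k$ with $M^\alpha_k\in\bar{\mathcal{R}}$ and using (\ref{d-prop}) together with $(ab)^\ast=b^\ast a^\ast$, one checks that the $\lambda$-component of $R$ equals $\sum_k(-\partial)^k M^{\ast\alpha}_k=(M^\alpha)^\ast$; thus the $\lambda$-part of $R$ is literally the dual $M^\ast$ of the syzygy. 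The $\phi$-component involves the structure operator $K$ from (\ref{SF}), fixed by $M$ only up to terms proportional to the left-hand sides of (\ref{RNF}); since $\mathcal{G}$ is defined over the on-shell ring $\mathcal{R}$, such terms vanish and the assignment $\Phi\colon M\mapsto R$ is well defined.

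Next I would record the module structure and injectivity. Because dualization reverses the order of factors, it carries the left $\mathcal{R}$-module $\mathcal{M}$ to the right $\mathcal{R}$-module $\mathcal{M}^\ast$, and the construction of $\Phi$ is $\mathbb{R}$-linear and compatible with this passage, so $\Phi$ factors as $\mathcal{M}\xrightarrow{\ (\cdot)^\ast\ }\mathcal{M}^\ast\xrightarrow{\ \bar\Phi\ }\mathcal{G}$ with $\bar\Phi$ a homomorphism of right $\mathcal{R}$-modules. Injectivity of $\bar\Phi$ is immediate: if $\Phi(M)=0$ in $\mathcal{G}\subset\mathcal{R}^{n+l}$, then in particular its $\lambda$-component $(M^\alpha)^\ast$ vanishes, whence $M^\alpha=0$ by involutivity of $(\cdot)^\ast$, so $M=0$.

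For surjectivity --- which I expect to be the main obstacle --- I would start from an arbitrary $R\in\mathcal{G}$ and extend it to a gauge symmetry of the Pontryagin action by adjoining the momentum transformations (\ref{GT-EXT}). By Noether's second theorem this symmetry corresponds to a Noether identity of (\ref{HE}), and the structural analysis of Section 3.2 applies: since the evolution equations are solved for the time derivatives and hence independent, every Noether identity must involve time derivatives of the constraints, and these split into the part coming from the compatibility condition (\ref{CC}), which yields the transformations (\ref{NGT}) acting trivially on $\phi$ and $\lambda$, and a part that is a syzygy (\ref{BDT}) of the $T_\alpha$, i.e. an element $M\in\mathcal{M}$. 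Because the compatibility part contributes nothing to $\delta_\varepsilon\phi$ and $\delta_\varepsilon\lambda$, the generator $R$ coincides with $\Phi(M)$ on the $(\phi,\lambda)$-components, so $\bar\Phi$ is onto and $\mathcal{G}\simeq\mathcal{M}^\ast$ as right $\mathcal{R}$-modules. The delicate point throughout is precisely this completeness of the classification of Noether identities, which rests on the special form of the system (\ref{HE}) --- evolutionary equations resolved for $\partial$ and constraints free of $\partial$ --- and on the stabilization of the chain (\ref{asc}) guaranteeing that $\mathcal{M}$ is exhausted by the finitely many syzygies constructed in Section 3.3.
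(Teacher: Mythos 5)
Your proof follows essentially the same route as the paper, which presents Proposition \ref{G=M} as a direct consequence of the Noether correspondence (\ref{GTS})--(\ref{GI}) between gauge symmetries of the Pontryagin action and Noether identities for (\ref{HE}), combined with the analysis of Sec.~3.2 showing that identities involving $\partial\overline{T}{}^J$ yield only the transformations (\ref{NGT}) acting trivially on $(\phi,\lambda)$, while the remaining identities are exactly the syzygies (\ref{BDT}) of $\mathbf{T}$, converted into gauge generators by dualization via (\ref{GaugeTr}). Your explicit checks of well-definedness, injectivity through the $\lambda$-component, and surjectivity simply spell out details the paper leaves implicit.
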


If we introduce the collective notation $M_I$ for the covectors from the generating set $\mathbf{M}=\{M_\alpha\}\cup\{M_A\}$ of $\mathcal{M}$, then the corresponding generators (\ref{R}) of $\mathcal{G}$ can be written as
\begin{equation}\label{RR}
R_I=\left(
           \begin{array}{c}
             R_I^i \\
             R_I^\alpha \\
           \end{array}
         \right) =\left(
                                \begin{array}{c}
                                  K_\alpha ^{\ast ki} M^{\ast\alpha}_{kI} \\
                                  M_I^{\ast \alpha} \\
                                \end{array}
                              \right)\,,
\end{equation}
where
$$
M_I^\alpha = \sum_{k=0}^{K+1}M_{Ik}^\alpha\partial^k\,,\qquad M_{Ik}^\alpha\in \bar{\mathcal{R}}\,,
$$
and the coefficients $K^{ki}_\alpha$ are defined by relation (\ref{SF}).

Given the generating set $\mathbf{R}=\{R_I\}$ of $\mathcal{G}$,  the isomorphism stated by Proposition \ref{G=M} implies that $\mathcal{Z}=\mathrm{Syz}(\mathbf{R})\simeq \mathcal{N}^\ast$, where   $\mathcal{N}=\mathrm{Syz}(\mathbf{M})$. The module $\mathcal{Z}$, being dual to the free module $\mathcal{N}$,  is free and we can take the vectors $\mathbf{Z}=\{N_A^\ast\}$ as its basis. By construction, $R_I N_A^{\ast I}=0$. Now, arranging the vectors of the generating sets $\mathbf{R}$ and $\mathbf{Z}$ into rectangular matrices and treating these matrices  as homomorphisms of free, right $\mathcal{R}$-modules, we get the desired free resolution (\ref{MR}). The ranks of the free modules in (\ref{MR}) satisfy the relation $r_1-r_2=l$, which, being translated in physical language, just says that the number of ``independent gauge symmetries'' is equal to $l$. This is consistent with the heuristic idea that the number of independent gauge parameters should coincide with the number of arbitrary functional parameters - the fields $\lambda^\alpha$ - in the general solution to the field equations.

\section{An example}

By way of illustration we will consider the following equation:
\begin{equation}\label{dA}
\partial_\mu A^\mu +\frac g2 A_\mu A^\mu=0\,.
\end{equation}
Here,  $A^\mu$ is a vector field in two-dimensional Minkowski space and the index $\mu=0,1$ is raised and lowered with the help of the Minkowski metric.

In the special case of $g=0$, the equation enjoys the irreducible  gauge symmetry
\begin{equation}\label{de}
\delta_\varrho A^\mu =\epsilon^{\mu\nu}\partial_\nu\varrho\,,
\end{equation}
where $\epsilon^{\mu\nu}=-\epsilon^{\nu\mu}$ is the Levi-Civita symbol. This symmetry is an immediate consequence of the Poincar\'e Lemma for the differential forms on plane. A simple count shows that the gauge transformation  (\ref{de}) leaves no room for the physical degrees of freedom, so that the theory appears to be  topological at the free level. For $g\neq 0$, the model (\ref{dA}) keeps to be  topological, while the structure of gauge symmetry becomes much more complicated.

In order to make contact with the notation of the previous sections we set
$$
x^0=x\,,\qquad x^1=\bar x\,,\qquad A^0=-A_0=\phi\,,\qquad A^1=A_1=\lambda\,.
$$
Then, the field equation (\ref{dA}) takes the Cartan normal form
\begin{equation}\label{df}
\partial \phi+\bar\partial \lambda-\frac g2\phi^2+\frac g2 \lambda^2=0\,.
\end{equation}
The corresponding Pontryagin action reads
$$
S=\int \pi\Big( \partial\phi +\bar\partial\lambda-\frac g2\phi^2+\frac g2\lambda^2 \Big)dxd\bar x\,.
$$
Varying this action, we get (\ref{df}) plus the pair of equations
\begin{equation}\label{TP}
T=\frac{\delta S}{\delta \lambda}=-\bar\partial\pi +g\lambda \pi=0\,,\qquad \Pi=\frac{\delta S}{\delta \phi}=-\partial\pi-g\phi\pi=0\,.
\end{equation}
The time derivative of the Hamiltonian constraint $T$ is given by
\begin{equation}\label{TPF}
\partial T =(\bar\partial-g\lambda)\Pi-g\phi T+g F\pi\,,
\end{equation}
where
$
F\equiv \bar\partial \phi+\partial \lambda
$.
In case $g=0$, relation (\ref{TPF}) turns to the identity
$$
\partial T -\bar\partial \Pi=0
$$
for the equations (\ref{TP}). In line with Proposition \ref{G=M}, this identity gives rise to the gauge transformation (\ref{de}).

If $g\neq 0$, then the first derivative $\partial T$ is not proportional to $T$ and the other equations, and we need to know the second time derivative of $T$.
Differentiating, we find
\begin{equation}\label{d2T}
\partial^2T=\partial(\bar\partial -g\lambda)\Pi -gF\Pi -g \partial (\phi T)+g\pi (\partial -g \phi)F\,.
\end{equation}
Solving (\ref{TPF}) for $\pi$, we get
\begin{equation}\label{pi}
\pi=\frac{(\partial +g\phi)T-(\bar\partial -g\lambda)\Pi}{gF}\,.
\end{equation}
Substituting  (\ref{pi}) into (\ref{d2T}) yields the identity for the equations (\ref{TP})
\begin{equation}\label{d2t-id}
\partial^2T=\partial(\bar\partial -g\lambda)\Pi-gF\Pi -g\partial(\phi T) + \frac{(\partial +g\phi)T-(\bar\partial -g\lambda)\Pi}{F}(\partial -g\phi)F\,,
\end{equation}
which is equivalent to
\begin{equation}\label{id1}
(\partial + g\phi)\big(F^{-1}(\partial +g\phi)T\big)-(\partial +g\phi)\big(F^{-1}(\bar\partial -g\lambda)\Pi\big)-g\Pi=0\,.
\end{equation}
The corresponding gauge transformation reads
\begin{equation}\label{gt1}
\delta_{\varepsilon'}\phi=g\varepsilon'+(\bar\partial +g\lambda)\big(F^{-1}(-\partial +g\phi)\varepsilon'\big)\,,\qquad \delta_{\varepsilon'} \lambda =(-\partial +g\phi)\big(F^{-1}(-\partial +g\phi)\varepsilon'\big)\,.
\end{equation}

 The identity (\ref{d2t-id}) also says that the left $\mathcal{R}$-module $\mathcal{T}$ defined in Sec. \ref{res} is generated over $\bar{\mathcal{R}}$ by $T$ and $\partial T$. Since the rank of $\mathcal{T}$  is equal to $1$, the generators $T$ and $\partial T$ are linearly dependent over $\bar{\mathcal{R}}$. To obtain a linear relation between them we just apply  the operator $g(\bar\partial -g\lambda)$ to both sides of (\ref{pi}). This gives one more identity for the equations (\ref{TP})
\begin{equation}\label{id2}
(\bar\partial -g\lambda)\big(F^{-1}(\partial + g\phi)T\big)+gT-(\bar\partial -g\lambda)\big(F^{-1}(\bar\partial - g\lambda)\Pi\big)=0
\end{equation}
and one more gauge transformation for the original fields
\begin{equation}\label{gt2}
\delta_{\varepsilon''} \phi =-(\bar\partial +g\lambda)\big(F^{-1}(\bar\partial +g\lambda)\varepsilon''\big)\,,\qquad \delta_{\varepsilon''} \lambda = (\partial - g\phi)\big(F^{-1}(\bar\partial + g\lambda)\varepsilon'' \big) + g\varepsilon''\,.
\end{equation}

According to Proposition \ref{N-basis}, the number of ``independent gauge transformations'' coincides with the number of $\lambda$'s in the Cartan normal form of the field equations. In the case at hand, the latter is equal to $1$. Hence, the two-parameter gauge symmetry  (\ref{gt1}), (\ref{gt2}) is reducible. The generator of the corresponding gauge-for-gauge transformation can be read off from a syzygy for identities  (\ref{id1}) and (\ref{id2}). It is easy to check that multiplying (\ref{id2}) on the left by $\partial+g\phi$ and subtracting  the result from (\ref{id1}) multiplied  by $\bar \partial-g\lambda$, we get zero for any functions $T$ and $\Pi$. Upon dualization this syzygy gives the following  gauge-for-gauge transformation:
\begin{equation}\label{gfg}
\delta_\varepsilon \varepsilon'=(\bar\partial+g\lambda)\varepsilon\,,\qquad \delta_{\varepsilon}\varepsilon''=-(\partial-g\phi)\varepsilon\,.
\end{equation}

Both the gauge transformations and the reducibility relation can be rewritten in the Lorenz covariant form. To this end, we introduce the operator of ``covariant derivative''
$$
D_\mu=\partial_\mu+g A_\mu\,,
$$
with the curvature
$$
D_\mu\tilde{D}^\mu=gF\,,
$$
where
$$
F=\epsilon^{\mu\nu}\partial_\mu A_\nu\,, \qquad \tilde{D}^\mu=\epsilon^{\mu\nu}D_\nu\,.
$$
In terms of the covariant derivative, the variation of the field equation (\ref{dA}) is given by $D_\mu \delta A^\mu$ and the gauge transformations read\footnote{Here we restrict ourselves  to the general field configurations for  which $F\neq 0$.}
\begin{equation}\label{cgt}
\delta_\varepsilon A^\mu =g \varepsilon^\mu-\tilde{D}^\mu (F^{-1} D_\nu \varepsilon^\nu) \,,
\end{equation}
where $\varepsilon^\mu$ is an arbitrary vector parameter. These gauge transformations  reproduce (\ref{gt1}) and (\ref{gt2}) upon identification $(\varepsilon^0, \varepsilon^1)=(\varepsilon', \varepsilon'')$, while the  gauge-for-gauge transformation (\ref{gfg}) takes the form
\begin{equation}\label{crr}
\delta_\varepsilon \varepsilon^\mu=\tilde{D}^\mu\varepsilon\,,
\end{equation}
with $\varepsilon$ being an arbitrary scalar parameter.

To summarize, the field equation  (\ref{dA}) describes a gauge theory without physical degrees of freedom. For $g\neq 0$, the gauge symmetry is reducible and involves two gauge parameters with one gauge-for-gauge transformation. At $g=0$ the structure of gauge symmetry bifurcates: there is a  one-parameter gauge transformation, which is automatically irreducible.

What are the physical implications of this bifurcation phenomenon?
In our opinion it gives a striking counterexample to a widespread belief that any deformation of a free theory  by interaction can either \textit{deform} or \textit{break}  its gauge symmetry\footnote{In the physical literature, a simultaneous deformation of equations of motion and their gauge symmetries is known as the ``Noether procedure''.}:   It is not difficult to see that the gauge symmetry (\ref{de}) of the free field equation $\partial_\mu A^\mu=0$ admits \textit{no} deformation by $g$ to a gauge symmetry of the nonlinear equation (\ref{dA}), and yet the nonlinear equation is gauge invariant! To gain greater insight into how  this happens, it is instructive to look at the transformations (\ref{cgt}) and (\ref{crr}) in the ``abelian limit'' $g\rightarrow 0$. Setting the coupling constant to zero, we find
\begin{equation}\label{dAdE}
\delta_\varepsilon A^\mu=-\epsilon^{\mu\nu}\partial_\nu(F^{-1}\partial_\lambda\varepsilon^\lambda)\,,\qquad \delta_{\varepsilon}\varepsilon^\lambda =\epsilon^{\lambda\mu}\partial_\mu\varepsilon\,.
\end{equation}
As is seen, the gauge transformations for  $A_\mu$ reproduce (\ref{de}) with $\varrho=-F^{-1}\partial_\lambda \varepsilon^\lambda$.  Since the gauge parameters $\varepsilon^\mu$ enter these transformations through a single function $\varrho$, the gauge symmetry appears to be reducible. Taken together relations (\ref{dAdE}) allows one to gauge out as many degree of freedom as the single gauge transformation (\ref{de}).
So, one may regard (\ref{dAdE}) as just a queer form of the ``minimal'', i.e., irreducible,  gauge transformation (\ref{de}). It is this form, however, that admits a deformation to the gauge symmetry of the nonlinear theory (\ref{dA}).

Thus, the main lesson to learn is that a smooth deformation of field equations is not always
followed by a  smooth deformation of their gauge symmetry generators even when the number of physical degrees of freedom is preserved. The system may keep to be gauge invariant, possessing  the same number of physical degrees of freedom, while the structure of its gauge symmetry changes drastically. These results cast some doubt on the previous ``no-go'' theorems for the existence of consistent interactions in various field-theoretical models as all these theorems considered deformations of one particular set of gauge generators.  Similar to the example above, the minimal set of gauge generators may happen to resist any nontrivial deformation, while a non-minimal  and deformable set (if exists) is far from obvious.

\section{Overlook}

In this paper, we proposed a simple method for constructing a full
set of gauge symmetry generators for arbitrary system of field
equations in two dimensions. In the case of reducible gauge
generators, the method provides an explicit basis for reducibility
relations. At the heart of our construction is a special Hamiltonian
system - the Pontryagin system - that can be associated to any
system of PDEs in the Cartan normal form. Although the Pontraygin
system is not dynamically equivalent to the original one, it makes
possible to borrow some ideas and constructions from Dirac's
constrained dynamics.

The knowledge of the free resolution (\ref{MR}) allows one to
determine the number of physical degrees of freedom per point. This
number is perhaps  the most important physical characteristic of a
gauge system. The explicit  formula for this number can be found in
\cite{KLS}. When applied to our equations (\ref{RNF}), the formula
expresses the number of physical degrees of freedom in terms of the
number and the order of the differential operators defining the
gauge algebra generators $\mathbf{R}$ \textit{and} the generators
$\mathbf{Z}$ of reducibility relations. The formula makes no
difference between the space and the time coordinates being thus
invariant under the general coordinate transformations. On the other
hand, having brought a system of PDEs into the Cartan normal form,
one can try to take the advantage of separating the independent
variables into the time and the space coordinates. As we ague below,
the Cartan normal form suggests a more simple way for computing
physical degrees of freedom in 2D field theory: these can be
evaluated directly by a complete set of gauge symmetry generators
$\mathbf{R}$, without any knowledge of the reducibility relations
$\mathbf{Z}$.

In order to formulate our hypothesis we rewrite the general gauge transformation associated with  $\mathbf{R}$ in a form which sets apart the time derivatives of the gauge parameters from their space derivatives:
\begin{equation}\label{BR}
\delta_{\varepsilon} \phi^i=\sum_{k=1}^K\bar{R}^i_{Ik}\partial^k\varepsilon^I\,,\qquad \delta_\varepsilon\lambda^\alpha = \sum_{k=1}^{K+1}\tilde{R}^\alpha_{Ik}\partial^k\varepsilon^I\,.
\end{equation}
Here all $\bar{R}$'s belong to $\bar{\mathcal{R}}$. Let us introduce the multi-index $\Lambda =(I,k)$ that runs over $L$ different pairs and the matrix $\bar{\mathbf{R}}=(\bar{R}_B^i)\in \bar{\mathcal{R}}^{n\times L}$ determining the gauge transformations of $\phi$'s. Now we can formulate the following

\vspace{3mm}
\noindent
\textbf{Conjecture 1}. \textit{The number of physical degrees of freedom is equal to} $n-m-\mathrm{rank}\, \bar{\mathbf{R}}$.
\vspace{3mm}

Although  we are unable to prove this statement at the moment, we
can adduce some convincing arguments in its favor. Notice that the
maximal order of the time derivative of $\varepsilon$'s in the gauge
transformations of $\phi$'s is one less than the corresponding order
in the transformations of $\lambda$'s. This fact follows immediately
from the very structure of the field equations (\ref{RNF}).  The
time derivatives of the gauge parameters $\varepsilon^\Lambda \equiv
\partial^k\varepsilon^I$, $\Lambda=1,2,\ldots, L$, can be made
\textit{arbitrary} functions of $\bar{x}$ at each  given instant of
time $x=x_0$. By making use of these functions we can change
independently the values of at most $\max (L, n+l)$ components of
the fields $\phi$ and $\lambda$. The actual number of components
that can be affected by the gauge transformations (\ref{BR}) depends
on the structure of $\bar{R}$'s. As the infinitesimal gauge
transformations are linear in $\varepsilon^\Lambda$, one can
evaluate this number  by considering the rank of the corresponding
matrix. (Recall that the concept of rank is well defined for the
matrices over $\bar{\mathcal{R}}$, see Appendix A.) From formulae
(\ref{M}) and (\ref{RR}) one can see immediately  that the highest
time derivatives of the gauge parameters enter the gauge
transformations in such a way that it is possible to gauge out all
the components $\lambda^\alpha$  at the cost of the functions
$\partial^{K+1}\varepsilon^I$. So, one can regard $\lambda$ as a
purely gauge field\footnote{This is also agreed with the fact that
one can set $\lambda^\alpha$ to be totally arbitrary functions of
$x$ and $\bar x$ in the general solution to the field equations.}.
As to the remaining field $\phi$, the number of its  unaffected
components must be then given by the corank of the matrix
$\bar{\mathbf{R}}$. With allowance made for the $m$ constraints
$\overline{T}^J=0$ imposed  on the $n$ components of the field
$\phi$ this leaves  $n-m- \mathrm{rank} \,\bar{\mathbf{R}}$ physical
degrees of freedom per space point.

The most difficult part in a rigorous proof of the above conjecture is the independence of  $\mathrm{rank}\, \bar{\mathbf{R}}$ (and hence, the number of physical degrees of freedom) of the choice of a generating set $\mathbf{R}$ for the gauge symmetry transformations. In this situation, it is desirable to have a basis-independent definition for the value $\mathrm{rank}\, \bar{\mathbf{R}}$. For that end, consider the field equations in the strong normal form (\ref{NF}). The corresponding Pontryagin action (\ref{S}) describes a constrained Hamiltonian system with the primary Hamiltonian constraints  $T_\alpha$ and ${\overline{T}}^J$. To count the number of the physical degrees of freedom in this model one can follow the usual prescriptions of the constrained Hamiltonian formalism \cite{Dirac}, \cite{HT}. Namely, by applying the Dirac-Bergmann algorithm one first find all the secondary constraints ensuring the consistency of the whole dynamics. These constraints,
both primary and secondary, are then separated into first and second class. The number of physical degrees of freedom is obtained by subtracting from the dimension of the original phase space the number of the second class constraints and the doubled number of the first class constraints. It should be noted that the original system of field equations is not equivalent to that resulting from the Pontryagin action. The latter system contains extra degrees of freedom carried by the momenta $\pi_i$. Taking into account that the fields $\lambda^\alpha$ -
the Lagrange multipliers to the primary constraints $T_\alpha$ - are purely gauge and the fact that the physical degrees of freedom    are equally distributed between the ``position coordinates'' and momenta in the phase space, we arrive at the conclusion that the number of physical degrees of freedom  carried by the original fields is the half of the physical degrees of freedom of the  Pontryagin system.   These arguments can be further refined by making use of the special structure of the Hamiltonian constraints. According to (\ref{TT}) the holonomic constraints $\overline{T}^J$ are in involution with $T_\alpha$. So, they produce no secondary constraints. The secondary constraints result from the iterated Poisson brackets of the Hamiltonian constraints $T_\alpha$. Like the primary constraints $T_\alpha$, all their descendants  are linear in $\pi$'s. This makes possible to  regard them as the first class  constraints on the momenta.
The second class constraints do not appear in this theory.

As is well known in constrained dynamics \cite{Dirac}, \cite{HT}, the gauge symmetries are generated by the whole set of first class constraints through the Poisson bracket.  The Hamiltonian action of  the constraints
$\bar{T}^J$ do not affect the original  fields $\phi^i$ and can thus be regarded as producing a gauge transformation for the momenta $\pi_i$. At the same time, the equations $\bar{T}^J=0$ impose no restriction on the momenta $\pi_i$,
constraining exclusively the original fields $\phi^i$. Contrary to this, the constraints $T_\alpha=0$ can be regarded as linear differential equations for $\pi$'s (with  coefficients depending  on $\phi$'s). The Hamiltonian action of $T_\alpha$ transforms both
$\pi_i$  and $\phi^i$. Thus, restricting to the sector of original dynamics, we see that the space of fields $\phi$ is constrained  (at each point)
by the $m$ equations $\bar{T}^J=0$  and is foliated by the gauge orbits resulting from the Hamiltonian action of the
primary constraints $T_\alpha$ and all their descendants. Let $\{T_I\}$ denote the complete set of the Hamiltonian constraints (primary and secondary) not including the holonomic constraints $\overline{T}^J$. The constraints $T_I$, being by definition linear in momenta, can be interpreted as generators of a left $\bar{\mathcal{R}}$-module $\mathcal{D}\subset \bar{\mathcal{R}}^n$. By construction, the module $\mathcal{D}$ is closed with respect to the Poisson bracket. Geometrically, one can think  of $\mathcal{D}$ as an integrable distribution in the tangent bundle of the configuration space of fields $\phi$. The integral leaves of this distribution are then identified with the gauge orbits generated by the first-class constraints. As any submodule of $\bar{\mathcal{R}}^n$, the module $\mathcal{D}$ admits a finite basis. It gives a basis of the first-class constraints that are linear in momenta. Now subtracting from $n$ - the number of fields $\phi$ - the number of the holonomic constraints $m$ together with  the ``dimension'' of the gauge orbits, which is identified with $\mathrm{rank}\, \mathcal{D}$, we should obtain the number of physical degrees of freedom per point, that is,  $n-m-\mathrm{rank}\, \mathcal{D} $. Comparing this with Conjecture 1, we arrive at

\vspace{3mm}\noindent
\textbf{Conjecture 2}. $\mathrm{rank}\,\bar{\mathbf{R}}=\mathrm{rank}\, \mathcal{D}$.
\vspace{3mm}

Notice that the aforementioned possibility to choose a \textit{basis} in the set of first class constraints owns its existence in special algebraic properties of the ring of ordinary differential operators $\bar{\mathcal{R}}$ and, eventually, in two-dimensionality of the model. So, neither of the statements above applies directly to higher dimensional field theories.

Concluding this section, we would like to stress that our arguments in support of both the conjectures are mostly heuristic
and by no means substitute rigorous proofs.

\begin{appendix}
\section{Jacobson normal form}

 Given the ring $\bar{\mathcal{R}}=\mathcal{F}[\bar\partial]$, a square matrix $\mathbf{U}\in \bar{\mathcal{R}}^{n\times n}$ is called \textit{unimodular} if there exists a matrix $\mathbf{U}^{-1}\in \bar{\mathcal{R}}^{n\times n}$ such that $\mathbf{UU}^{-1}=\mathbf{U}^{-1}\mathbf{U}=1$. Recall the following fundamental result.

\vspace{3mm}
\noindent\textbf{Theorem} (on diagonal reduction). \textit{Given a matrix $\mathbf{M}\in \bar{\mathcal{R}}^{n\times m}$, there exist unimodular matrices $\mathbf{U}\in \bar{\mathcal{R}}^{n\times n}$ and   $\mathbf{V}\in \bar{\mathcal{R}}^{m\times m}$ such that
\begin{equation}\label{JNF}
  \mathbf{UMV}=\left(
        \begin{array}{cc}
          \mathbf{D} & 0 \\
          0 & 0 \\
        \end{array}
      \right)\,,
\end{equation}
where $\mathbf{D}=\mathrm{diag} (1,\ldots,1,\Delta)\in \bar{\mathcal{R}}^{l\times l}$ for some nonzero $\Delta\in \bar{\mathcal{R}}$.}

The number $l$ is called the \textit{rank of the matrix} $\mathbf{M}$. Notice that the decomposition (\ref{JNF}) is not unique, only the order of the differential operator $\Delta$ is uniquely defined.
 The right hand side of (\ref{JNF}) is known as the Jacobson normal form of the matrix $\mathbf{M}$ \footnote{Another name is the Teichm\"uller-Nakayama normal form.}. It can be viewed as a non-commutative generalization of the Smith decomposition for a matrix over a Euclidean domain.

Since the ring $\mathcal{R}$ is a principal ideal domain, the
matrices $\mathbf{U}$ and $\mathbf{V}$ can be obtained by performing elementary row and column operation. These include
\begin{itemize}
  \item interchanging the $i$-th and $j$-th columns;
  \item multiplying the $j$-th column on the right by  $a\in \bar{\mathcal{R}}$ and adding it to the $i$-th column;
  \item multiplying the $i$-th column on the right by a nonzero element $\alpha\in \mathcal{F}$;
\end{itemize}
and the same operations on rows with right replaced by left.

Every elementary operation can be represented by right or left multiplication by an elementary unimodular matrix. Furthermore, every unimodular matrix $\mathbf{U}$ or $\mathbf{V}$  can be obtained as a product of such elementary unimodular matrices. The algorithm transforming each matrix to the normal form resembles the Gauss elimination procedure; the details can be found in \cite{Cohn}.

In practical terms, one can use the  decomposition (\ref{JNF}) for constructing a basis in a finitely generated $\bar{\mathcal{R}}$-module.  Given a right $\bar{\mathcal{R}}$-module $\mathcal{M}\subset \bar{\mathcal{R}}^n$ generated by the vectors $M_1,\ldots , M_m$, define the matrix $\mathbf{M}=(M_1,\ldots, M_m)\in \bar{\mathcal{R}}^{n\times m}$. According to (\ref{JNF}) the module $\mathcal{M}$ is freely generated by the first $l-1$ columns of the matrix $\mathbf{U}^{-1}$ plus the $l$-th column of $\mathbf{U}^{-1}$ multiplied by $\Delta$ on the right.

\end{appendix}

\end{document}